\newtheorem{thm}{Theorem}
\newtheorem{lem}{Lemma}
\title{ Finite Blocklength Rates over a Fading Channel with CSIT and CSIR}
\author{Deekshith P K  and Vinod Sharma \\ {ECE Dept., Indian Institute of Science, Bangalore, India.}\\{\hspace{-.6cm}Email: \{deeks, vinod\}@ece.iisc.ernet.in}}
\begin{document}
\maketitle
\begin{abstract}
In this work, we obtain lower and upper bounds on the maximal {transmission} rate at a given codeword length $n$, average probability of error $\epsilon$ and power constraint $\bar{P}$, over a finite valued, block fading additive white Gaussian noise (AWGN) channel with channel state information (CSI) at the transmitter {and the receiver}. {These bounds characterize} deviation of the finite blocklength coding rates from the channel capacity which is in turn achieved by the \textit{water filling} power allocation across time. {The bounds obtained also} characterize the rate enhancement possible due to the {CSI} at the transmitter in the finite blocklength regime. The results are further elucidated via numerical examples.
\end{abstract}
\noindent
\section{Introduction}
Next generation cellular {networks ought} to handle \textit{mission critical} data with delay requirements far more stringent than that in present day cellular networks \cite{agiwal2016next}. Refined engineering insights to build such delay critical systems can be obtained using  the analytical methods pioneered  in \cite{strassen1962asymptotische}, \cite{polyanskiy2010channel}, \cite{hayashi2009information}. In this work, we characterize data rate enhancement in a wireless system with delay constraints by means of \textit{power adaptation}, when the transmitter has certain \textit{side information} about the channel. We restrict our attention to the delay incurred at the physical layer in sending the codeword to the receiver.   
\par In a cellular system, if the \textit{instantaneous channel gain} can be fed back to the transmitter, the transmitter can use this knowledge to perform power control so as to increase the \textit{overall data rate}. In particular, under the assumption of perfect CSI at the transmitter (CSIT) and the receiver (CSIR), the optimal power allocation with no delay constraints over a flat fading AWGN channel has the well known interpretation of \textit{water filling} in time {(\cite{goldsmith1997capacity})}.
\par With delay constraints imposed at the physical layer, the traditional approach to study rate enhancement due to the knowledge of CSIT is to first characterize either the delay limited, outage or average capacity (see \cite{el2011network} for details) and then obtain a power allocation strategy that maximizes the required one of these quantities. In this regard, \cite{caire1999optimum} obtains the optimal power allocation that maximises the outage capacity under the assumption of \textit{non-causal} CSIT. In \cite{negi2002delay}, the authors obtain the optimal power allocation scheme maximizing the average capacity with causal CSIT. Nonetheless, the above mentioned schemes do not provide a realistic metric to evaluate the performance of actual delay sensitive systems. This is because, the various notions of capacity used therein are inherently asymptotic.  
\par In this work, we provide lower and upper bounds on the \textit{maximal channel coding rate} over a block fading AWGN channel with CSIT and CSIR in \textit{finite blocklength regime} under two kinds of constraints on the transmitted codewords. Consequently, we characterize the rate enhancement possible due to power adaptation at the transmitter. Our assumption of perfect CSIT is idealistic. Nevertheless, rates obtained under this assumption provide upper bounds for rates achievable without CSIT or with imperfect CSIT {and is commonly made in literature}. Also, {knowledge of the power control strategies suitable for delay constrained systems {sheds} insights into how system energy is to be used in such systems.} Efficient usage of system energy can be beneficial for energy constrained transmitters ought to become prominent { in future wireless networks \cite{mahapatra2016energy}}.  
\par An overview of the preceeding works is presented next. A scalar coherent fading channel with stationary fading (generalization of block fading)  \textit{without} CSIT is considered in \cite{polyanskiy2011scalar} and the \textit{dispersion} term is characterized. In \cite{yang2015optimum}, the authors show that the (second order) optimal power allocation scheme over a \textit{quasi static} fading channel with CSIT and CSIR is \textit{truncated channel inversion}. {The quasi static fading model is a special case of the block fading model that we consider. However, our bounds involve asymptotic terms, asymptotic in the number of blocks and is derived under the assumption of a finite valued fading process.  A MIMO Rayleigh block fading channel with no CSIT and CSIR is considered in \cite{durisi2016short} and achievability and converse bounds are derived for the \textit{short packet communication} regime. In \cite{tomamichel2014second}, {the} authors consider a \textit{discrete memoryless channel} with {non-causal} CSIT and CSIR, and obtains the second order coding rates under general assumptions on the state process. In contrast, we consider a cost constrained setting with real valued channel inputs. {This renders a direct translation of the techniques used therein infeasible. A high-SNR normal approximation of the maximal coding rate over a block fading Rayleigh channel without CSIT and CSIR is obtained in \cite{lancho2017high}.
\par Our contribution is a finer characterization of the delay limited performance of a wireless link with channel state feedback, under two kinds of power constraints  the wireless transmitters are normally subjected to. The bounds obtained (on the maximal coding rate)  characterize the rate enhancement due to power adaptation for a given codeword length and error probability. In deriving these bounds, the CSIT assumption makes the analysis involved and non trivial.  In particular, in obtaining the upper bounds, the dependence of the channel input on the fading states makes the  corresponding optimization problems difficult to solve. To circumvent this, we derive alternate bounds utilizing the properties of asymptotically optimal power allocation scheme, viz, the water filling scheme. 
\par {The paper is organized as follows.} In Section \ref{mod_notn}, we introduce the system model and notation. We {provide}  lower  bounds on the maximal channel coding rate in Section \ref{lower_bnd}. Next, in Section \ref{sec_up_bnd}, we  provide upper bounds for the maximal coding rate. In Section \ref{num_ex}, we compare the bounds numerically and exemplify the utility of the bounds derived. We conclude in Section \ref{con}. Proofs are delegated to the appendices.    
\section{Model and Notation}
\label{mod_notn}
We consider a point to point {discrete time, memoryless} block fading channel subject to AWGN noise with density $\mathcal{N}(0,\sigma_N^2)$, {where $\mathcal{N}(a,b)$ denotes Gaussian density with mean $a$ and variance $b$.} The noise is  independent and identically distributed (i.i.d.) across channel uses. Let $T_c$ denote the \textit{channel coherence time} which is the duration over which the gain of the underlying physical channel remains constant. Let $D$ (in appropriate time units) denote the delay constraint imposed on the communication. Then, $B=\lceil\frac{D}{T_c}\rceil$ denotes the number of blocks over which the communication spans {(where $\lceil{x}\rceil$ denotes the smallest integer $\geq x$)}. Let $n_c$ denote the number of times the channel is used within a block. Then, the number of channel uses for the whole {of communication}, or equivalently, the codeword length $n=Bn_c$. 
\par The channel gain or the fading coefficient in block $b$ is denoted as $H_b$, which is a random variable taking values in a finite set $\mathcal{H}=\{\eta_1,\hdots,\eta_{|\mathcal{H}|}\}$ such that $\min\{\eta_i:{1\leq i\leq |\mathcal{H}|}\}>0$. Here $|\mathcal{H}|$ (the cardinality of the set) denotes the number of fading states. Let $q_i$ denote the probability of $H_b$ taking value $\eta_i$ and $q_i>0$. The channel gains are i.i.d. across blocks and is independent of the additive noise process. {The instantaneous channel gains are assumed to be known to the transmitter  as well as the receiver and the transmitter gets to know them only causally (we refer to it as the full CSIT and CSIR assumption).}  
\par Let $X_{(b-1)n_c+k}$ denote the channel input corresponding to {the} $k^{\text{th}}$ channel use in {the} $b^{\text{th}}$ block, where $k \in \{1,\hdots,n_c\}$, $b \in \{1,\hdots,B\}$. For convenience, from here on, $[bk]\equiv (b-1)n_c+k$.  Let $Z_{[bk]}$ (distributed as $\mathcal{N}(0,\sigma_N^2)$) and $Y_{[bk]}$ denote the corresponding noise variable and the channel output, respectively. Then, $Y_{[bk]}=H_{b}X_{[bk]}+Z_{[bk]}.$ If the delay $D$ tends to infinity (and hence, the number of blocks $B$ tends to infinity), it is well known that the channel capacity is given by {(\cite{goldsmith1997capacity})}
\begin{equation}
\label{gold_eqn}
\mathbf{C}(\bar{P})\triangleq\mathbb{E}_{\text{H}}\Bigg[\frac{1}{2}\log\Big(1+\frac{H^2\mathcal{P}_{\text{WF}}(H)}{\sigma_N^2}\Big)\Bigg],
\end{equation}
 where $\mathcal{P}_{\text{WF}}(H)\triangleq \Big(\lambda-\frac{\sigma_N^2}{H^2}\Big)^+,$ $(.)^+=\max(0,.)$, $\mathbb{E}_{\text{H}}[.]$ denotes the expectation with respect to the distribution of the random variable $H$ and $\lambda$ is obtained by solving the equation $\mathbb{E}_{\text{H}}\Big[\mathcal{P}_{\text{WF}}(H)\Big]=\bar{P}.$ Here, $\mathcal{P}_{\text{WF}}(.)$ {is the water filling solution with average power constraint $\bar{P}$}.
\par {We make note of certain notations that we use throughout}. Let $ C(x)\triangleq \frac{1}{2}\log\big(1+\frac{x}{\sigma_N^2}\big)$, $\mathcal{L}(x)\triangleq \frac {x}{\sigma_N^2+x}$, $V(x)\triangleq \frac{1}{2}\big[1-\big(1-\mathcal{L}(x)\big)^2\big]$. A set of positive integers $\{1,2,\hdots,N\}$ is denoted as $[1:N]$. Wherever required, we denote a sum of $n$ numbers $\big\{a_k,~k\in[1:n]\big\}$, $\sum_{k=1}^{n}a_k$, as $S_n(a)$. We choose to represent the channel input and output vectors conveniently as {a} collection of $B$  vectors, {each}  of length $n_c$. Thus, the channel input $\mathbf{x} \equiv (x_1,\hdots,x_n)=(\underline{x}_1,\hdots,\underline{x}_{{B}})$, where $\underline{x}_b=(x_{[b1]},\hdots,x_{[bn_c]}),~b \in [1:B]$. Similarly, we have the noise vector $\mathbf{z}=(\underline{z}_1,\hdots,\underline{z}_{B})$ and the channel output vector $\mathbf{y}=(\underline{y}_1,\hdots,\underline{y}_{B})$. Also, the vector of channel fading gains $\mathbf{h}= (h_1,\hdots,h_B)$ is a collection of $B$ scalars.  Corresponding random variables are denoted as $\mathbf{X},~\mathbf{Z},~\mathbf{Y}$ and $\mathbf{H}$. Let $\eta_{\min} \triangleq \min\{\eta_i:{i\in[1:|\mathcal{H}|]}\}$, $q_{\min} \triangleq \min\{q_i:{i\in[1:|\mathcal{H}|]}\}$. Similarly, the corresponding maximum values are denoted as $\eta_{\max}$ and $q_{\max}$, respectively. The Cartesian product of two sets $S_1$, $S_2$ is denoted as $S_1\times S_2$ and $n$ fold Cartesian product of sets $S_1,\hdots,S_n$ is denoted as $S^n$. The set of integers is denoted as $\mathbb{Z}$ and {the} set of positive integers as $\mathbb{Z}_+$. {The real line is denoted as $\mathbb{R}$, positive real line as $\mathbb{R}_+$, the $n$ dimensional Euclidean space by $\mathbb{R}^n$ and $\mathbb{R}_+^n\triangleq \mathbb{R}_+\times\hdots\times\mathbb{R}_+$ ($n$ times). Given vectors $\mathbf{x},~\mathbf{y} \in \mathbb{R}^n$, $||\mathbf{x}||$ denotes the Euclidean norm of $\mathbf{x}$ and $\langle \mathbf{x},\mathbf{y}\rangle$ denotes the inner product between $\mathbf{x}$ and $\mathbf{y}$. The variance of a random variable $X$ is denoted as $\mathbb{V}[X]$. The function $\Phi(.)$ denotes the cumulative distribution function (cdf) of a standard Gaussian random variable, $\phi(.)$ denotes the corresponding density function and $\Phi^{-1}(.)$ denotes its inverse cdf. The notation $f_n=o(g_n)$ is equivalent to  $\displaystyle \lim_{n \rightarrow \infty}f_n/g_n=0$. Also, $f_n=O(g_n)$ is equivalent to $|f_n|\leq K|g_n|$, for some constant $K$, for all $n$ sufficiently large. The notation $\overline{\lim}\equiv \limsup$. {To indicate relations that hold almost surely, we use the abbreviation a.s.}. We use the notation $\stackrel{D}{=}$ to mean equivalence in distribution. {We denote the indicator function of an event as $\mathbf{1}_A$. The exponential function is denoted as $\text{exp}(.)$. For any set $A$, $A^c$ denotes its complement. All logarithms are taken to the natural base.}  

\section{Maximal Coding Rate: Lower bounds}
\label{lower_bnd}
In this section, we will obtain  lower bounds on the maximal coding rate for a given codeword length and average probability of error, under two different kinds of power constraints on the transmitted codewords. We have the following definitions. Let $\mathcal{M} \equiv [1:M]$ denote the message set. Let $S$ be a uniformly distributed random variable, corresponding to the message transmitted, taking values in $\mathcal{M}$. Given $S \in \mathcal{M}$ and the fading coefficients $ (H_1,\hdots,H_b)$, at $k^{\text{th}}$ instance of channel use in block $b$, the output of the encoder $\zeta_{[bk]}:\mathcal{M}\times\mathcal{H}^b \rightarrow \mathbb{R}$  is denoted as  $X_{[bk]}$. The decoder $\psi:\mathbb{R}^n\times\mathcal{H}^B \rightarrow \mathcal{M}$, on obtaining the $(\mathbf{Y},\mathbf{H})$ pair, outputs an estimate of the message $\hat{S}$. The encoding and decoding is done so that the average probability of error $P\big(\psi(\mathbf{Y},\mathbf{H}) \neq S\big)\leq \epsilon$, where $\epsilon$ is prefixed. Throughout this work, we adhere to the average probability of error formalism.
\par In this work, we consider two types of power constraints: 
\begin{equation}
\label{ST_avg_PC}
\hspace{-5pt}
\textbf{ST}: ~\sum_{b=1}^{B}\sum_{k=1}^{n_c}\zeta_{[bk]}^2(m,H^b) \stackrel{\text{a.s}}{\leq} Bn_c\bar{P},~\forall m \in \mathcal{M}.
\end{equation}
\begin{equation}
\label{LT_avg_PC}
\textbf{LT}: ~\mathbb{E}_{\mathbf{H}}\Bigg[\sum_{b=1}^{B}\sum_{k=1}^{n_c}\zeta_{[bk]}^2(m,H^b)\Bigg] \leq Bn_c\bar{P},~\forall m \in \mathcal{M}.
\end{equation}
 The constraint in equation (\ref{ST_avg_PC}) is referred to as the \textit{short term power constraint} ($\mathbf{ST}$) and  that in (\ref{LT_avg_PC}),  as the  \textit{long term power constraint} ($\mathbf{LT}$). Studying a communication system  under the $\mathbf{ST}$ constraint is motivated by the peak power limitations of the circuitry involved. Whereas, imposing the $\mathbf{LT}$ constraint captures the requirement of power utilization efficiency  in a communication device (for instance, battery powered mobile radio transmitters). In addition, in a wireless communication setting, in studying systems which allocate resources (e.g., rate, power) dynamically, the $\mathbf{LT}$ constraint is a natural metric to consider. Though in reality these constraints are simultaneously present, in this work, we study them in isolation. For each of the above constraint, the goal is to characterize the maximum size (or cardinality) of the codebook with block length $n$ and average probability of error  $\epsilon$,  denoted as   $M^*(n,\epsilon,\bar{P})\equiv M^*$. The maximal coding rate $R^*(n,\epsilon,\bar{P})\equiv R^*=n^{-1}{\log M^*}$.
 \par   Our first result gives a lower bound on $\log M^*$ under $\mathbf{ST}$ constraint.
\begin{thm}
\label{Th_ST_LB2}
For a block fading channel with input subject to a short term power constraint $\bar{P}$ and average probability of error $\epsilon$, the maximal codebook size $ M^*$ satisfies
\begin{equation}
\label{Result2}
{\log M^* \geq n\mathbf{C}(\bar{P})-\sqrt{\frac{n}{2}}+\sqrt{n{V_{\text{BF}}(\bar{P})}}\Phi^{-1}(\epsilon)+o(\sqrt{n}).}
\end{equation}
Here, with $G \triangleq H^2\mathcal{P}_{\text{WF}}(H) $, 
\begin{flalign*}
\hspace{-20pt}
V_{\text{BF}}(\bar{P}) \triangleq \mathbb{E}_{\text{G}}\big[V(G)\big]+n_c\mathbb{V} \big[C(G)\big]+\frac{1}{2}\mathbb{V}\big[\mathcal{L}(G)\big].
\end{flalign*}
  \end{thm}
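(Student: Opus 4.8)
The plan is to establish (\ref{Result2}) by a random-coding achievability argument specialized to the causal water-filling allocation, with the receiver exploiting its knowledge of $\mathbf{H}$ (CSIR). First I would fix the input distribution: in block $b$, having observed $H^b$ causally, transmit a sub-vector $\underline{X}_b$ drawn from a Gaussian ensemble whose per-symbol power is the water-filling value $\mathcal{P}_{\text{WF}}(H_b)$. Since the realized total power $n_c\sum_b \mathcal{P}_{\text{WF}}(H_b)$ concentrates at $Bn_c\bar{P}$ but fluctuates, I would project the codewords onto the total-power shell $\sum_{b,k}X_{[bk]}^2 = Bn_c\bar{P}$ so that the short-term constraint (\ref{ST_avg_PC}) holds almost surely for every message. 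Conditioned on $\mathbf{H}=\mathbf{h}$ the channel is AWGN with block-dependent signal-to-noise ratios $G_b/\sigma_N^2$, $G_b = h_b^2\mathcal{P}_{\text{WF}}(h_b)$, and I would invoke a Feinstein/dependence-testing achievability bound (or the $\kappa\beta$ bound, to accommodate the cost constraint) against the per-block capacity-achieving reference output $\mathcal{N}(0,\sigma_N^2+G_b)$, then average the conditional error estimate over $\mathbf{H}$. CSIR is what lets the decoder threshold-test using the correct per-block statistics.

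The second step identifies the information density $\imath \triangleq \sum_{b=1}^{B}\imath_b$ that drives the bound and computes its first two moments. Its mean is $n_c\sum_b \mathbb{E}[C(G_b)] = n\mathbf{C}(\bar{P})$, the leading term. For the Gaussian ensemble the per-symbol conditional dispersion is $\mathcal{L}(G_b)$; since $\{H_b\}$ are i.i.d. the block contributions $\imath_b$ are i.i.d., so a law-of-total-variance decomposition (conditioning within each block on $H_b$) gives per-symbol variance $\mathbb{E}_{\text{G}}[\mathcal{L}(G)] + n_c\mathbb{V}[C(G)]$, the two pieces being the within-block dispersion and the cross-block fluctuation of the conditional means $n_c C(G_b)$. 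Enforcing the total-power shell removes the common-mode part $\tfrac12(\mathbb{E}_{\text{G}}[\mathcal{L}(G)])^2$ of the codeword-power fluctuation, and the elementary identity $\mathcal{L}(x) = V(x) + \tfrac12\mathcal{L}(x)^2$ (immediate from the definitions of $V$ and $\mathcal{L}$) then rewrites the remainder exactly as $\mathbb{E}_{\text{G}}[V(G)] + n_c\mathbb{V}[C(G)] + \tfrac12\mathbb{V}[\mathcal{L}(G)] = V_{\text{BF}}(\bar{P})$.

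The third step is a Berry--Esseen central limit theorem applied to $\sum_{b=1}^{B}\imath_b$, a sum of $B$ i.i.d. terms. The required uniform third-moment bound holds because $\mathcal{H}$ is finite, so $G_b$ is bounded away from $0$ and $\infty$ and the within-block contributions are built from Gaussian and chi-square variables with all moments finite. Since $n_c$ is fixed and $B = n/n_c \to \infty$, the Berry--Esseen remainder is $O(1)$ in $\log M^*$, hence $o(\sqrt{n})$, and the $\epsilon$-quantile of the Gaussian limit supplies the term $\sqrt{n\,V_{\text{BF}}(\bar{P})}\,\Phi^{-1}(\epsilon)$. All lower-order contributions from the achievability bound and from inverting the CLT are absorbed into $o(\sqrt{n})$, consistent with the stated precision.

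The main obstacle, and where I expect the genuine work to lie, is the almost-sure short-term constraint (\ref{ST_avg_PC}) under causal CSIT: unlike the long-term constraint (\ref{LT_avg_PC}), it must hold for every fading realization, so the water-filling profile cannot be used verbatim and the shell projection is unavoidable. I expect the deterministic back-off induced by this projection to account precisely for the $-\sqrt{n/2}$ penalty, while simultaneously collapsing the Gaussian-ensemble dispersion $\mathbb{E}_{\text{G}}[\mathcal{L}(G)] + n_c\mathbb{V}[C(G)]$ down to $V_{\text{BF}}(\bar{P})$. Making this quantitative --- showing that the power-overflow event costs only $o(\sqrt{n})$ in error probability while the shell normalization costs exactly $\sqrt{n/2}$ in rate, uniformly in the fading realization --- is the crux; given the finiteness of $\mathcal{H}$, the moment computations and the central limit step are then routine.
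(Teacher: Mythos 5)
Your overall architecture (base code plus water-filling power control, a $\kappa\beta$/normal-approximation bound against the per-block reference output $\mathcal{N}(0,\sigma_N^2+G_b)$, Berry--Esseen, and the identity $\mathcal{L}(x)=V(x)+\tfrac12\mathcal{L}(x)^2$ to land on $V_{\text{BF}}$) matches the paper's, and the moment bookkeeping is consistent with the known spherical-code dispersion structure. The gap is in the one step you yourself identify as the crux: how the almost-sure $\mathbf{ST}$ constraint is enforced. Projecting onto the total-power shell $\sum_{b,k}X_{[bk]}^2=Bn_c\bar{P}$ requires the normalization factor $\sqrt{Bn_c\bar{P}/\|\mathbf{X}\|^2}$, which depends on the entire realization $(H_1,\hdots,H_B)$; the model grants CSIT only causally, so the symbols of block $1$ are committed before $H_2,\hdots,H_B$ are known and no such global rescaling is implementable. (It also couples the blocks, breaking the i.i.d. structure your Berry--Esseen step relies on.) Moreover, even if the projection were allowed, it would not produce the $-\sqrt{n/2}$ term: $\|\mathbf{X}\|^2$ concentrates at $Bn_c\bar{P}$ with $O(\sqrt{B})$ fluctuations of both signs, so a shell normalization perturbs the rate only at $O(1)$, not at order $\sqrt{n}$. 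Your claim that the shell normalization ``costs exactly $\sqrt{n/2}$ in rate'' has no mechanism behind it.

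The paper's device is different and is the idea you are missing: it transmits with the water-filling profile computed for the \emph{backed-off} power $\bar{P}-\delta_B$ (i.e., $\mathcal{P}_{\text{WF}}^{(\text{B})}$ with $\mathbb{E}_H[\mathcal{P}_{\text{WF}}^{(\text{B})}(H)]=\bar{P}-\delta_B$), checks the cumulative spent power causally at the start of each block, and declares an error if it would exceed $B\bar{P}$. Hoeffding's inequality makes this overflow event have probability at most $\kappa/B$ once $\delta_B=\lambda\sqrt{2/B^{1-\alpha}}$, and that probability is absorbed into $\epsilon$. The $-\sqrt{n/2}$ is then exactly the first-order Taylor penalty of the back-off: using $\mathcal{L}(G)=\mathcal{P}_{\text{WF}}(H)/\lambda$ one gets $n\big(\mathbf{C}(\bar{P})-\mathbf{C}(\bar{P}-\delta_B)\big)\approx n\delta_B/(2\lambda)=\sqrt{n^{1+\alpha}/2}$ (for $n_c=1$), with $\alpha>0$ arbitrarily small. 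So the constant $\sqrt{1/2}$ is tied to the Hoeffding exponent needed to make the overflow probability $O(1/B)$, not to any shell normalization. To repair your proof you would replace the projection by this back-off-plus-causal-truncation argument and then carry the $\delta_B$-dependence through the capacity, dispersion, and $\Phi^{-1}$ terms by Taylor expansion, which is precisely what Appendix A does.
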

\begin{proof}
See Appendix A.
\end{proof}
 The following result provides a lower bound on $\log M^*$ subject to $\mathbf{LT}$ constraint.
\begin{thm}
\label{Th_LT_LB1}
For a block fading channel with input subject to a long term power constraint $\bar{P}$ and average probability of error $\epsilon$, the maximal codebook size $M^*$ at a given blocklength $n$ satisfies 
\begin{equation}
\label{Result1}
\log M^* \geq n\mathbf{C}(\bar{P})+\sqrt{n{V_{\text{BF}}}(\bar{P})}\Phi^{-1}(\epsilon)+o(\sqrt{n}),
\end{equation}
where, the notation is as in Theorem \ref{Th_ST_LB2}.
 \end{thm}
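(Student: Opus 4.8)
The plan is to run the same random-coding achievability argument that establishes Theorem~\ref{Th_ST_LB2}, modifying only the single step in which the power budget is enforced. Under $\mathbf{ST}$ the per-codeword power must meet $\bar{P}$ \emph{almost surely} for every fading realization; since the water-filling profile spends $\sum_{b=1}^{B} n_c\mathcal{P}_{\text{WF}}(H_b)$, a quantity that is random with mean $Bn_c\bar{P}$ and $\Theta(\sqrt{n})$ fluctuations, enforcing the a.s.\ bound forces a deterministic back-off of the water level, and it is exactly this back-off whose second-order price is the $-\sqrt{n/2}$ term in (\ref{Result2}). Under $\mathbf{LT}$ only $\mathbb{E}_{\mathbf{H}}\big[\|\mathbf{X}(m,\mathbf{H})\|^2\big]\le Bn_c\bar{P}$ is required, and the water-filling profile already meets this with equality because $\mathbb{E}_{\text{H}}[\mathcal{P}_{\text{WF}}(H)]=\bar{P}$; no back-off is needed. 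Note that one cannot simply invoke the inclusion ``every $\mathbf{ST}$ code is an $\mathbf{LT}$ code,'' as that would merely reproduce the weaker bound of Theorem~\ref{Th_ST_LB2} together with its $-\sqrt{n/2}$ penalty; the achievability must genuinely be re-run under the relaxed constraint.

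Concretely, I would keep the codebook and decoder of Theorem~\ref{Th_ST_LB2}: using the causally revealed fading, the sub-codeword sent in block $b$ is drawn from the power-shell ensemble at the water-filling level $\mathcal{P}_{\text{WF}}(H_b)$, and decoding is by the same information-density threshold rule, exploiting CSIR. Feasibility is now immediate, since averaging the squared norm over $\mathbf{H}$ yields $Bn_c\,\mathbb{E}_{\text{H}}[\mathcal{P}_{\text{WF}}(H)]=Bn_c\bar{P}$, so (\ref{LT_avg_PC}) holds at full power. The realizations on which $\sum_b n_c\mathcal{P}_{\text{WF}}(H_b)$ overshoots $Bn_c\bar{P}$ are harmless here: the decoding-error analysis is carried out conditionally on each fading sequence and never invoked the a.s.\ power bound, so admitting the overshoot only enlarges the feasible ensemble.

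The second-order analysis is then inherited from the proof of Theorem~\ref{Th_ST_LB2}, since the two constructions differ only in power enforcement and not in the statistics of the decoding metric. For completeness, the accounting is as follows. The total information density $\sum_{b=1}^{B}\sum_{k=1}^{n_c}\imath_{[bk]}$ decomposes into the per-block contributions $W_b\triangleq\sum_{k=1}^{n_c}\imath_{[bk]}$, which are i.i.d.\ across $b$ because the fading is i.i.d.\ across blocks and the noise is i.i.d.\ across channel uses. Conditioned on $G=H_b^2\mathcal{P}_{\text{WF}}(H_b)$, $W_b$ has mean $n_cC(G)$ and variance $n_cV(G)$, so the mean of the sum is $Bn_c\,\mathbb{E}_{\text{G}}[C(G)]=n\mathbf{C}(\bar{P})$, and the law of total variance produces $n\,\mathbb{E}_{\text{G}}[V(G)]+n\,n_c\mathbb{V}[C(G)]$, the first two terms of $V_{\text{BF}}(\bar{P})$; the remaining term $\tfrac12\mathbb{V}[\mathcal{L}(G)]$ of $V_{\text{BF}}(\bar{P})$ is the same power-shell normalization correction that appears in Theorem~\ref{Th_ST_LB2}. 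A Berry--Esseen bound applied to the $B$ i.i.d.\ summands $W_b$ (whose moments are finite because $\mathcal{H}$ is finite with $\eta_{\min}>0$ and $q_{\min}>0$) converts the decoding threshold into $n\mathbf{C}(\bar{P})+\sqrt{nV_{\text{BF}}(\bar{P})}\,\Phi^{-1}(\epsilon)+o(\sqrt{n})$, which is (\ref{Result1}).

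The only genuinely new point relative to Theorem~\ref{Th_ST_LB2} is the feasibility step, and that is the easy direction here; the remaining difficulty --- reconciling the within-block AWGN dispersion with the across-block fading fluctuation in a single central-limit statement and tracking the $\tfrac12\mathbb{V}[\mathcal{L}(G)]$ correction to order $n$ in the variance --- is shared with, and already handled by, the proof of Theorem~\ref{Th_ST_LB2}. Hence the net effect of relaxing $\mathbf{ST}$ to $\mathbf{LT}$ is precisely the disappearance of the $-\sqrt{n/2}$ term, yielding (\ref{Result1}).
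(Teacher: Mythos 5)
Your proposal is correct and follows essentially the same route as the paper's Appendix B: compose a unit-power-shell no-CSIT codebook with a causal water-filling power controller, observe that $\mathbb{E}_{\text{H}}[\mathcal{P}_{\text{WF}}(H)]=\bar{P}$ makes the $\mathbf{LT}$ constraint hold with equality so that no water-level back-off (and hence no $-\sqrt{n/2}$ term) is needed, and then inherit the second-order analysis for the induced no-CSIT block-fading channel with gains $H_b\sqrt{\mathcal{P}_{\text{WF}}(H_b)}$. Your per-block Berry--Esseen accounting of $V_{\text{BF}}(\bar{P})$ is just an unpacking of the step the paper delegates to the analysis of \cite{polyanskiy2011scalar}.
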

\begin{proof}
See Appendix B.
\end{proof}
\section{Maximal Coding Rate: Upper bounds}
\label{sec_up_bnd}
In this section, we provide upper bounds on the maximal coding rate under the $\mathbf{ST}$ and $\mathbf{LT}$ constraints. In deriving the upper bounds, we assume that CSIT is known non-causally. First, we obtain an upper bound for the $\mathbf{ST}$ case. {Next}, we proceed to provide an upper bound for the $\mathbf{LT}$ case. 
\begin{thm}
\label{Th_ST_UB1}
For a block fading channel with input subject to a short term power constraint $\bar{P}$, average probability of error $0<\epsilon<\frac{1}{2}$, the maximal codebook size $M^*$ satisfies 
\begin{equation}
{\log M}^*\leq n\mathbf{C}(\bar{P})+\sqrt{nV_{\text{BF}}'(\bar{P})}\Phi^{-1}(\epsilon)+O({\log n}),
\label{Result3}
\end{equation}
{where, with $G$ as in Theorem \ref{Th_ST_LB2} and $\lambda$ as in (\ref{gold_eqn}), $V_{\text{BF}}'(\bar{P}) \triangleq \mathbb{E}_{\text{G}}[V(G)]+\mathbb{V}\Big[n_cC(G)+\frac{\bar{P}}{2\lambda}-\frac{\mathcal{L}(G)}{2}\Big].$  }
\end{thm}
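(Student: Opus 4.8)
The plan is to establish this converse through the meta-converse of Polyanskiy, Poor and Verd\'{u}, specialized to the state-aware setting in which the decoder observes the pair $(\mathbf{Y},\mathbf{H})$. First I would fix, as the auxiliary output reference, the product (over blocks and channel uses) of the water-filling capacity-achieving output laws: conditioned on $H=\eta$, each coordinate of $Q_{\mathbf{Y}|\mathbf{H}}$ is $\mathcal{N}\!\big(0,\sigma_N^2+\eta^2\mathcal{P}_{\text{WF}}(\eta)\big)$. Since the $\mathbf{ST}$ constraint holds almost surely, every codeword is feasible, and the meta-converse yields $\log M^*\le -\log\beta_{1-\epsilon}\big(P_{\mathbf{Y},\mathbf{H}|\mathbf{X}},\,Q_{\mathbf{Y}|\mathbf{H}}P_{\mathbf{H}}\big)+O(\log n)$, uniformly over feasible encoders. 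Because $P_{\mathbf{H}}$ is common to both hypotheses, the relevant log-likelihood ratio reduces to a sum over blocks of the per-block densities $\log\big(dP_{\underline{Y}_b|\underline{X}_b,H_b}/dQ_{\underline{Y}_b|H_b}\big)$.

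The heart of the argument, and the step flagged in the introduction as the main difficulty, is that non-causal CSIT lets the block powers $P_b(\mathbf{h})=\|\underline{x}_b\|^2/n_c$ depend arbitrarily on the whole fading vector, so one cannot simply fix an input distribution. I would circumvent this by exploiting a structural property of water-filling. A direct computation of the conditional mean of the per-block density gives $n_cC(G_b)+\tfrac{n_c}{2}\,\eta_b^2\big(P_b-\mathcal{P}_{\text{WF}}(\eta_b)\big)/\big(\sigma_N^2+G_b\big)$, and the key observation is that the reward per unit power satisfies $\eta^2/(\sigma_N^2+G)\le 1/\lambda$ for every state, with equality on the active states and strict inequality on the inactive ones (where $\mathcal{P}_{\text{WF}}=0$). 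Combining this pointwise bound with the almost-sure budget $\sum_b P_b\le B\bar{P}$ furnished by $\mathbf{ST}$, the entire power-dependent contribution is dominated by a quantity that no longer references the unknown allocation: the log-likelihood ratio is dominated by $\hat{W}=\sum_b\big[n_cC(G_b)+\tfrac{n_c}{2\lambda}\big(\bar{P}-\mathcal{P}_{\text{WF}}(\eta_b)\big)\big]+\big(\text{zero-mean noise fluctuations}\big)$, a domination in the direction needed for a converse (a stochastically larger density can only shrink $\beta_{1-\epsilon}$).

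Here I would explain the mechanism that produces the converse dispersion $V_{\text{BF}}'$ rather than the achievability dispersion $V_{\text{BF}}$. The $\mathbf{ST}$ budget forces the worst-case allocation to saturate the \emph{sum} constraint, so $\sum_bP_b$ is replaced by the deterministic $B\bar{P}$, a pure mean shift that cancels because $\mathbb{E}[\mathcal{P}_{\text{WF}}(H)]=\bar{P}$; meanwhile $\sum_b\mathcal{P}_{\text{WF}}(\eta_b)$ survives as a \emph{random} term. On the active states $\mathcal{P}_{\text{WF}}(\eta)=\lambda\mathcal{L}(G)$, so this surviving correction is proportional to $-\mathcal{L}(G_b)$ and is correlated with $\sum_bC(G_b)$; this is precisely why $\mathcal{L}(G)/2$ enters $V_{\text{BF}}'$ coupled to $n_cC(G)$ inside a single variance, instead of appearing as the separate $\tfrac12\mathbb{V}[\mathcal{L}(G)]$ of the causal achievability scheme. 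With $\hat{W}$ a sum of $B$ i.i.d.\ block contributions, I would then verify $\mathbb{E}[\hat{W}]=n\mathbf{C}(\bar{P})$ and identify its normalized variance with $V_{\text{BF}}'(\bar{P})$: the within-block noise yields $\mathbb{E}_{\text{G}}[V(G)]$, and the across-block state fluctuation yields $\mathbb{V}\big[n_cC(G)+\tfrac{\bar{P}}{2\lambda}-\tfrac{\mathcal{L}(G)}{2}\big]$.

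Finally, I would evaluate $-\log\beta_{1-\epsilon}$ by the standard Gaussian approximation of the optimal binary test, applying the Berry--Esseen theorem to $\hat{W}$, which is legitimate since the blocks are i.i.d.\ with bounded, non-degenerate third moments, guaranteed by $\eta_{\min}>0$, $q_{\min}>0$ and $|\mathcal{H}|<\infty$. This gives $-\log\beta_{1-\epsilon}\le n\mathbf{C}(\bar{P})+\sqrt{nV_{\text{BF}}'(\bar{P})}\,\Phi^{-1}(\epsilon)+O(\log n)$, with $\Phi^{-1}(\epsilon)<0$ for $\epsilon<\tfrac12$; chaining with the meta-converse inequality yields the claim, the $O(\log n)$ absorbing both the Berry--Esseen correction and the logarithmic gap of the meta-converse. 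The main obstacle remains the water-filling domination step: one must check that the pointwise reward bound together with the almost-sure power budget removes the dependence on the state-adapted input while preserving the exact second-order behavior, and in particular that powering the inactive states (where the reward is strictly below $1/\lambda$) contributes no first-order loss.
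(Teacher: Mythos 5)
Your overall route is the same as the paper's: the meta-converse with the water-filling capacity-achieving output law as the auxiliary channel, the pointwise bound $h^2/(\sigma_N^2+G)\le 1/\lambda$ (the paper's inequality (\ref{rel_WF_ineq})) together with the identity $\mathcal{L}(G)=\mathcal{P}_{\text{WF}}(H)/\lambda$ (equation (\ref{rel_waterfill})) to eliminate the unknown non-causal allocation from the conditional mean of the information density, and the observation that the surviving random $-\mathcal{L}(G)/2$ term, correlated with $n_cC(G)$, is what turns $V_{\text{BF}}$ into $V_{\text{BF}}'$. That is precisely what happens in Appendix C up to (\ref{Psi_def}), and your account of why the converse dispersion differs from the achievability one is correct.

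The gap is in the step where you pass from ``the conditional mean is dominated'' to ``apply Berry--Esseen to the i.i.d.\ sum $\hat W$.'' Dominating the mean of the log-likelihood ratio does not stochastically dominate the log-likelihood ratio itself: its conditional variance $\nu(\mathbf{h})=\sum_b\big(2\sigma_N^2h_b^2n_c\mathcal{P}_{h_b}(\mathbf{h})+n_cg_b^4\big)\big/\big(2(\sigma_N^2+g_b^2)^2\big)$ (equation (\ref{variance_rel})) and its third absolute moments still depend on the allocation chosen with non-causal knowledge of $\mathbf{h}$, so the quantity whose lower tail you must bound is not a sum of i.i.d.\ blocks. Moreover, since $\log\gamma-\mu'(\mathbf{h})<0$ in the regime $\epsilon<\tfrac12$, an adversarial variance moves $\Phi(\cdot)$ in the unfavourable direction, so you cannot simply substitute the water-filling variance after fixing the mean. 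The paper devotes two lemmas to exactly this point: Lemma \ref{lem_approx_1} uses concentration of the empirical type $t_i(\mathbf{h})$ and the equality constraint $\sum_i t_i\mathcal{P}_i=\bar P$ to show the worst-case variance deviates from the water-filling one by $O(\sqrt{(\log B)/B})$, which costs only $O(\log n/n)$ after a Taylor expansion of $\Phi$; and Lemma \ref{lem_Berry} shows the worst-case Berry--Esseen ratio is still $O(1/\sqrt n)$ by proving $\mathbb{E}\sum_b\big[\mathcal{P}^*_{2,h_b}(\mathbf{H})\big]^3=O(B)$ for the maximizing allocation. You flag this verification in your final sentence but do not supply it; as written, the claim that ``a stochastically larger density can only shrink $\beta_{1-\epsilon}$'' is the step that fails, and filling it requires the content of Appendices E and F.
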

\begin{proof}
See Appendix C.
\end{proof}
\begin{thm}
\label{Th_LT_UB2}
For a block fading channel with input subject to a long term power constraint $\bar{P}$ and average probability of error $0<\epsilon<\frac{1}{2}$, the maximal codebook size $M^*$ satisfies 
\begin{equation}
\label{Result4}
\log  M^* \leq n\mathbf{C}(\bar{P})+\sqrt{\frac{n}{4\lambda^2}}+\sqrt{{nV_{\text{BF}}'(\bar{P})}}\Phi^{-1}(\epsilon)+O({\log n}),
\end{equation} 
 where, $\lambda$ is as in (\ref{gold_eqn}) and $V_{\text{BF}}'(\bar{P})$ is as in Theorem \ref{Th_ST_UB1}. 
\end{thm}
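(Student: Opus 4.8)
The plan is to run the same non-causal meta-converse that underlies Theorem~\ref{Th_ST_UB1}, and to isolate the single place at which the long-term constraint \eqref{LT_avg_PC} behaves differently from the short-term constraint \eqref{ST_avg_PC}. As in Appendix~C, I would fix the auxiliary output law $Q_{\mathbf{Y}|\mathbf{H}}$ to be the product, over blocks and channel uses, of the water-filling output densities $\mathcal{N}\!\big(0,\sigma_N^2+H_b^2\mathcal{P}_{\text{WF}}(H_b)\big)$, apply the meta-converse with a worst-case input, and reduce the bound on $\log M^*$ to a tail statement about the information density $\imath(\mathbf{X};\mathbf{Y}\mid\mathbf{H})$ under $P$. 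Writing $G_b\triangleq H_b^2\mathcal{P}_{\text{WF}}(H_b)$ and $s_b^2\triangleq\sigma_N^2+G_b$, the conditional mean of $\imath$ given $(\mathbf{H},\mathbf{X})$ splits into a fading-driven part that, together with the within-block noise fluctuations, yields the leading term $n\mathbf{C}(\bar{P})$ and the dispersion $V_{\text{BF}}'(\bar{P})$ via the central limit theorem, plus the input-dependent term $T(\mathbf{H})\triangleq\sum_b \frac{H_b^2\|\underline{X}_b\|^2}{2s_b^2}$.

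The two water-filling identities I would lean on are $\frac{H^2}{2(\sigma_N^2+H^2\mathcal{P}_{\text{WF}}(H))}\le\frac{1}{2\lambda}$, with equality on the active states, and $\lambda\,\mathbb{E}_{\text{H}}[\mathcal{L}(G)]=\bar{P}$; together these give $T(\mathbf{H})\le\frac{1}{2\lambda}\|\mathbf{X}\|^2$ and identify the centering constant $\frac{\bar{P}}{2\lambda}=\frac12\mathbb{E}_{\text{H}}[\mathcal{L}(G)]$ already present inside $V_{\text{BF}}'(\bar{P})$. Here is the crux. Under $\mathbf{ST}$ one has $\|\mathbf{X}\|^2\le n\bar{P}$ \emph{almost surely}, so $T(\mathbf{H})\le\frac{n\bar{P}}{2\lambda}$ pathwise and may simply be absorbed into the deterministic centering, reproducing Theorem~\ref{Th_ST_UB1}. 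Under $\mathbf{LT}$ only $\mathbb{E}_{\mathbf{H}}[\|\mathbf{X}\|^2]\le n\bar{P}$ holds, so $T(\mathbf{H})$ is a genuine random variable that the (non-causal) encoder may redistribute over fading realizations, and the extra term $\sqrt{n/(4\lambda^2)}$ must come entirely from bounding the effect of this redistribution on the $\epsilon$-quantile of $\imath$.

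To extract that term I would use that, since we only seek an upper bound on $M^*$, it suffices to maximize the relevant quantile of $\imath$ over all input laws meeting the expected-power budget $\mathbb{E}_{\mathbf{H}}[\|\mathbf{X}\|^2]\le n\bar{P}$. The marginal value of power is exactly $\mathbf{C}'(\bar{P})=\frac{1}{2\lambda}$, so spending $n\bar{P}+\sqrt{n}$ units of expected power can raise the first-order term by at most $\frac{\sqrt{n}}{2\lambda}=\sqrt{n/(4\lambda^2)}$; equivalently, one shows that $M^*$ under $\mathbf{LT}$ at power $\bar{P}$ is dominated by an $\mathbf{ST}$ system at the slightly inflated power $\bar{P}+n^{-1/2}$ and then invokes Theorem~\ref{Th_ST_UB1}, Taylor-expanding $n\mathbf{C}(\bar{P}+n^{-1/2})=n\mathbf{C}(\bar{P})+\sqrt{n/(4\lambda^2)}+O(1)$ and $\sqrt{n\,V_{\text{BF}}'(\bar{P}+n^{-1/2})}=\sqrt{n\,V_{\text{BF}}'(\bar{P})}+O(1)$, the $O(1)$ corrections being swallowed by the $O(\log n)$ slack of the meta-converse. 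The main obstacle is precisely the justification of this reduction: because the codeword depends on the causal fading sequence, the power $\|\mathbf{X}\|^2$ can be concentrated on rare fading events, so one cannot pass from $\mathbf{LT}$ to a pathwise (or uniformly inflated) power bound by mere constraint inclusion. I expect this to be handled, as the introduction anticipates, by exploiting the water-filling structure — using $T(\mathbf{H})\le\frac{1}{2\lambda}\|\mathbf{X}\|^2$ to decouple the power from the fading law together with a truncation argument on $\|\mathbf{X}\|^2$ — so that only an $O(\sqrt{n})$ expected-power overshoot, worth $\frac{1}{2\lambda}\cdot O(\sqrt{n})$ in rate, survives in the quantile.
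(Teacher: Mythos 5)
Your proposal correctly locates both the shape of the argument (reduce the $\mathbf{LT}$ converse to the $\mathbf{ST}$ converse at an inflated power $\bar{P}+\delta_n$ with $\delta_n\approx n^{-1/2}$) and the arithmetic of the extra term, $n\big[\mathbf{C}(\bar{P}+\delta_n)-\mathbf{C}(\bar{P})\big]\approx n\delta_n\mathbf{C}'(\bar{P})=\sqrt{n}/(2\lambda)=\sqrt{n/(4\lambda^2)}$; this is exactly how the paper produces the middle term in (\ref{Result4}). But the step you yourself flag as ``the main obstacle'' --- justifying the passage from the expected-power budget to a pathwise power bound --- is precisely the content of the paper's proof, and your proposal does not supply it. The ``marginal value of power'' argument (spending $n\bar{P}+\sqrt{n}$ units of expected power raises the first-order term by at most $\sqrt{n}/(2\lambda)$) presupposes the very pathwise control it is meant to establish, and your guess that the gap is closed by the water-filling inequality $T(\mathbf{H})\le\|\mathbf{X}\|^2/(2\lambda)$ plus a truncation of $\|\mathbf{X}\|^2$ is not what the paper does.

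The paper's actual device is a concentration-of-measure argument on the power-allocation policy itself. Writing $\|\mathbf{x}\|^2=n\mathcal{P}(\mathbf{h})$ with $\mathbb{E}_{\mathbf{H}}[\mathcal{P}(\mathbf{H})]=\bar{P}$, it first restricts to the bounded class $\Pi_{\kappa,c}=\big\{\mathcal{P}:\mathcal{H}^B\rightarrow[0,\kappa+c],\ \mathbb{E}_{\mathbf{H}}[\mathcal{P}(\mathbf{H})]=\bar{P}\big\}$ (any policy can be clipped into this class, and $\kappa$ is optimized at the end), then invokes McDiarmid's bounded-difference inequality to get $\mathbb{P}(\mathcal{E}_n)\le\delta_n'=\exp\big(-2n\delta_n^2/(\kappa+c)^2\big)$ for the bad set $\mathcal{E}_n=\{\mathbf{h}:\mathcal{P}(\mathbf{h})>\bar{P}+\delta_n\}$. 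On $\mathcal{E}_n^c$ the input satisfies the $\mathbf{ST}$ constraint at power $\bar{P}+\delta_n$, so $\mathbb{P}[I_\gamma]\ge\mathbb{P}[I_\gamma\cap\mathcal{E}_n^c]$ is handled by the Appendix~C machinery, and the exceptional probability is absorbed as a shift $\epsilon\mapsto\epsilon+\delta_n'$ inside $\Phi^{-1}(\cdot)$, which is then Taylor-expanded away since $\delta_n'=\exp\big(-2n^{\alpha}/(\kappa+c)^2\big)\rightarrow 0$ for the choice $\delta_n=n^{-(1-\alpha)/2}$. Note also that the reduction is not the literal dominance ``$\mathbf{LT}$ at $\bar{P}$ is contained in $\mathbf{ST}$ at $\bar{P}+n^{-1/2}$'' that you tentatively state: it holds only off an event whose probability must be traded against the error probability, which is why the truncation parameter $\kappa$ and the exponent $\alpha$ both survive into the final bookkeeping. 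Without this concentration step (or an equivalent quantile argument on $\mathcal{P}(\mathbf{H})$), your outline does not close.
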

\begin{proof}
See Appendix D.
\end{proof}
\section{Numerical Examples}

In this section, we compare numerically, the various bounds obtained thus far. To that end, we need a refined characterization of the lower bounds. In particular, the $o(\sqrt{n})$ terms need to be explicitly identified. Terms contributing to the $o(\sqrt{n})$ expression in the lower bound for the $\mathbf{LT}$ constraint are same as that in the no CSIT case with $\mathbf{ST}$ constraint (\cite{polyanskiy2011scalar}). Hence, the terms can be identified by revisiting the analysis in \cite{polyanskiy2011scalar} and computing the constants therein, explicitly for the i.i.d. block fading case. Doing so, we obtain $o(\sqrt{n})=\frac{\log n}{2}-{\sqrt{n^{1-\beta}}}+O(n^\beta),$ for some small $\beta>0$.  For our computation, we fix $\beta=0.01.$ 
\par Next, consider the lower bound for the $\mathbf{ST}$ constraint case. Note that we are interested in error probability $\epsilon<1/2$ for which, $\Phi^{-1}(\epsilon)<0$. Under this assumption, the right hand side of (\ref{ST_M_bnd}) can be readily simplified using Taylor's theorem and the fact that $V_{\text{BF}}^t(.)$ therein, is a monotonically increasing function. By virtue of choice of $\delta_B$ (as mentioned in Appendix A), $\Phi^{-1}(\epsilon-\delta_B')=\Phi^{-1}(\epsilon)+O(1/n^{1.5})$. Thus, we obtain $\log M^*\geq n\mathbf{C}(\bar{P})-\sqrt{n/2}+\sqrt{nV_{\text{BF}}(\bar{P})}\Phi^{-1}(\epsilon)+o(\sqrt{n}),$ where the $o(\sqrt{n})$ term is same as in the $\mathbf{LT}$ case. The coefficient of $\log n/n$ in the upper bound for the $\mathbf{ST}$ and $\mathbf{LT}$ case is $|\mathcal{H}|/2$. We compute the terms excluding the $O(1/n^{1-\beta})$ terms in the lower bound and $O(1/n)$ term in the upper bound. Thus, the approximate bound that we compute is akin to the normal approximation for AWGN channels (\cite{polyanskiy2010channel}). With this, we proceed to compute the bounds. 
\par We consider the following \textit{discrete version} of the Rayleigh distribution: fix $\eta_{0}=0.1,$ $|\mathcal{H}|=10$, $~\eta_{|\mathcal{H}|-1}=4.1$  and $\Delta=(\eta_{|\mathcal{H}|-1}-\eta_{0})/(|\mathcal{H}|-1)$. Define $\eta_i=\eta_{i-1}+i\Delta$, for $i\in[1:|\mathcal{H}|-2]$ and $q_i=P(\eta_i\leq H_R\leq\eta_{i+1} ),$ for $i\in[0:|\mathcal{H}|-1]$ and $q_{|\mathcal{H}|-1}=P(H_R\geq \eta_{|\mathcal{H}|-1})$ where $H_R\sim$  Rayleigh distribution with parameter unity. Fix $\sigma_N^2=1$ and $\epsilon=0.01$ and $n_c=1$ (which corresponds to the \textit{fast fading} case). By fixing $\bar{P}=5$dB, we plot the convergence of various bounds to the channel capacity $\mathbf{C}(\bar{P})=0.6502$ nats/channel use (equation (\ref{gold_eqn})), as $B$ increases, in Figure \ref{fig_R_B}. The acronyms LB and UB refer to lower bound and upper bound. {We have also plotted the rates with no CSIT and $\mathbf{ST}$ constraint.
\begin{figure}[h]
\begin{center}
\hspace{-20pt}
\includegraphics[scale=0.45]{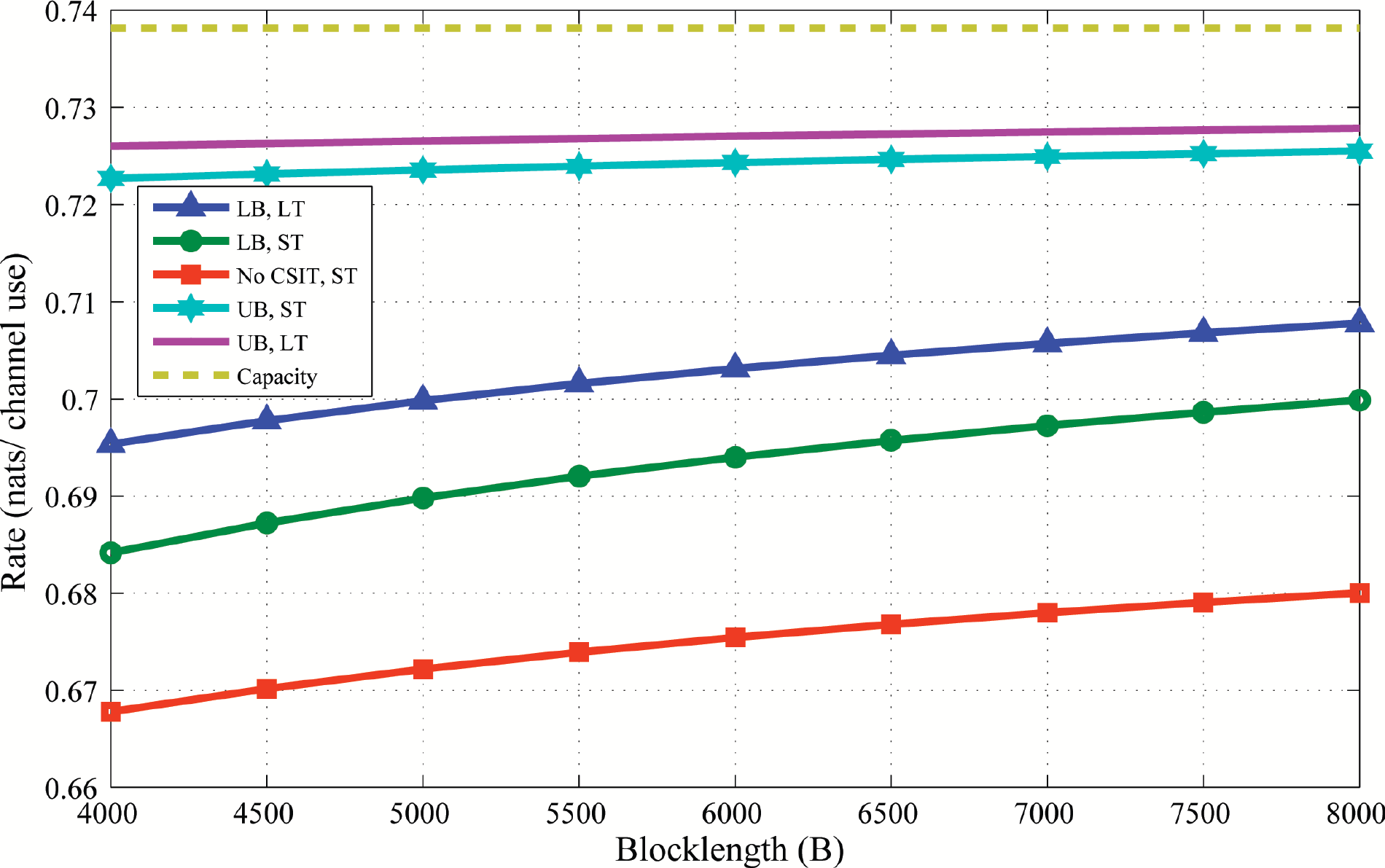}
\caption{Rate versus blocklength.} \label{fig_R_B}
\end{center}
\end{figure}
\begin{figure}[h]
\begin{center}
\hspace{-20pt}
\includegraphics[scale=0.45]{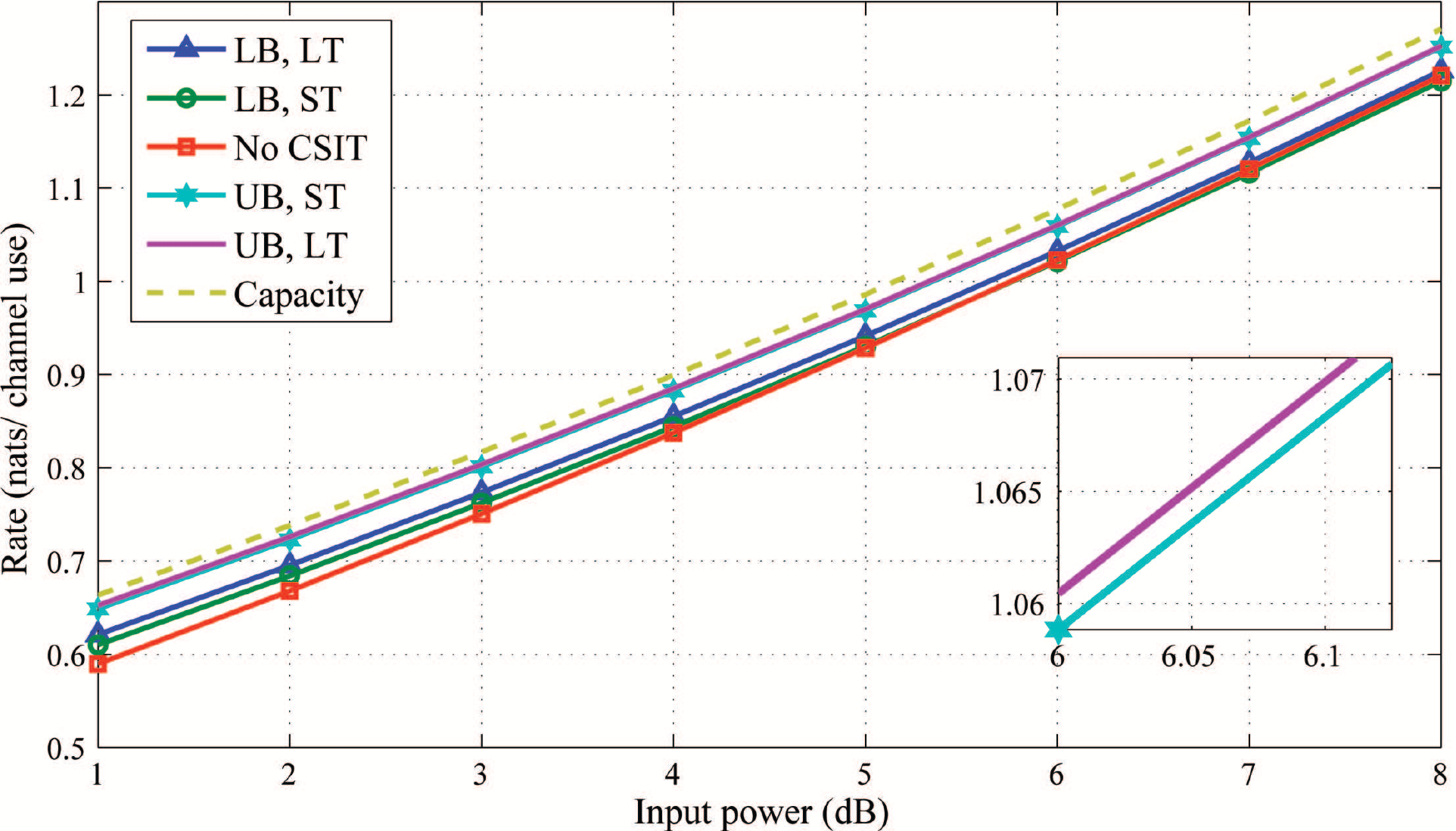}
\caption{Rate versus input power.} \label{fig_R_P}
\end{center}
\end{figure}
\par In Figure \ref{fig_R_P}, we fix $n_c=1,~B=4000$ and compare the various bounds for different values of $\bar{P}$.  By comparing the lower bound under the $\mathbf{ST}$ constraint with the rate for no CSIT case under $\mathbf{ST}$ constraint, the rate enhancement possible due to the knowledge of CSIT (via power control) is observed. 
\label{num_ex}
\section{Conclusion}
In this paper, we {have} obtained upper and lower bounds for the maximal coding rate over a block fading channel {with short term and long term average} power constraints on the transmitted codeword. The bounds obtained shed light {on} the rate enhancement possible due to the availability of CSIT. The bounds also characterize the performance of water filling power allocation in the finite block length regime. 
\label{con}
\section*{Acknowledgement}
The authors would like to thank Gautam Konchady Shenoy, ECE Dept., IISc, Bangalore and Wei Yang, Princeton University, for valuable discussions and useful comments.

\section*{Appendix A}
 \subsection*{Lower bound: $\mathbf{ST}$ constraint}
 Let $\mathcal{F}^{(n)}=\{\mathbf{x}' \in \mathbb{R}^{n}:||\mathbf{x}'||^2\ = n\}.$ Let $\mathcal{C}(M,n,\epsilon,1)$ be a codebook with $M$ codewords, codeword length $n$, average probability of error $\epsilon$ and codewords belonging to $\mathcal{F}^{(n)}$,  for the same block fading channel under consideration with CSIR but \textit{no} CSIT. These codewords satisfy the $\mathbf{ST}$ constraint. We use $\mathcal{C}(M,n,\epsilon,1)$ in conjunction with a power controller. Next, we explain the power control scheme. 
\par Choose $\lambda_B$ such that $\mathbb{E}_{H}\big[\big(\lambda_B-\frac{\sigma_N^2}{H^2}\big)^+\big]=\bar{P}-\delta_B,$ for some {$\delta_B$} (to be chosen appropriately later {with $\bar{P}>\delta_B>0$}). Let ${\mathcal{P}}_{\text{WF}}^{(\text{B})}({H})=\big( \lambda_B-\frac{\sigma_N^2}{H^2}\big)^+$. At the beginning of block $b$, the transmitter checks if the constraint $$\mathbf{C_{ST}}~:~\sum\limits_{l=1}^{b}\frac{||\underline{x}_l'||^2}{n_c}\Big(\lambda_B-\frac{\sigma_N^2}{H_l^2}\Big)^+ \leq B\bar{P}$$ is met. If $\mathbf{C_{ST}}$ is not met, the $\mathbf{ST}$ constraint  is violated. The transmission is halted and an \textit{error is declared} (a short error
message is sent to the receiver which it can receive without
error). Else, for the $k^{\text{th}}$ transmission in block $b$, the channel input $X_{[bk]}$ is chosen such that $X_{[bk]}=x_{[bk]}'\sqrt{\mathcal{P}_{\text{WF}}^{(\text{B})}(H_b)}$. If the transmitter sends codeword symbols successfully in all $B$ blocks without violating $\mathbf{C_{ST}}$ for all $b \in [1:B]$, $$\frac{1}{Bn_c}\sum\limits_{b=1}^{B}\sum\limits_{k=1}^{n_c}x_{[bk]}^2=\frac{1}{B}\sum\limits_{b=1}^{B}\frac{||\underline{x}_b'||^2}{n_c}\mathcal{P}_{\text{WF}}^{(\text{B})}(H_b)\leq \bar{P}$$ and the $\mathbf{ST}$ constraint is met. If transmission error does not occur, for all $b \in [1:B],~k\in[1:n_c]$, the channel output $Y_{[bk]}=H_{b}\sqrt{\mathcal{P}_{\text{WF}}^{(\text{B})}(H_b)}x_{[bk]}'+Z_{[bk]}.$ 
\par Let $G'\stackrel{D}{=}G_b' \triangleq H_b\sqrt{\mathcal{P}_{\text{WF}}^{(\text{B})}(H_b)}$. Any rate achievable over a block fading channel with i.i.d. fading process $\{G_b'\}$ with CSIR and \textit{no} CSIT satisfying an $\mathbf{ST}$ constraint of unity, is also achievable, over a channel subject to block fading process $\{H_b\}$ with CSIT and CSIR satisfying an $\mathbf{ST}$ constraint of $\bar{P}$. Henceforth, we consider the channel with block fading process $\{G_b'\}$ (having CSIR but no CSIT).
\par   Let $\delta_B'$ denote an upper bound on the probability of the event that $\mathbf{C_{ST}}$ is violated. A specific choice of $\delta_B'$ will be made later. Let $\tau'=\tau+\delta_B'$, for some $\tau>0$. It is immediate to see that the following bound (Theorem 25, \cite{polyanskiy2010channel}) holds: for $\epsilon'\triangleq \epsilon-\delta_B'$, $0<\tau<\epsilon'$, 
 \begin{equation}
\label{kay_beta}
M \geq \frac{\kappa_{\tau}\Big(\mathcal{F}^{(n)},Q_{\mathbf{Y},\mathbf{G}}\Big)}{\sup_{\mathbf{x}'\in\mathcal{F}^{(n)}}\beta_{1-\epsilon'+\tau}\Big(\mathbf{x}',Q_{\mathbf{Y},\mathbf{G}}\Big)},
\end{equation}
 where {$\beta_{\alpha}(\mathbf{x}',Q) \equiv \displaystyle\beta_{\alpha}(P_{\mathbf{Y}, \mathbf{G} \vert \mathbf{X}=\mathbf{x}'}(\mathbf{y},\mathbf{g}\vert \mathbf{x}'),Q) $,  $Q_{\mathbf{Y},\mathbf{G}}$ is an auxiliary output distribution, $\kappa_{\tau}(\mathcal{F}^{(n)},Q)$ are as defined in \cite{polyanskiy2010channel} and $ \displaystyle P_{\mathbf{Y},\mathbf{G} \vert \mathbf{X}=\mathbf{x}'}(\mathbf{y},\mathbf{g}\vert \mathbf{x}')$ corresponds to the channel transition probability.} Now, from the analysis in \cite{polyanskiy2011scalar}, it follows that 
\begin{equation}
\label{ST_M_bnd}
\log M^* \geq n\mathbf{C}\big(\bar{P}-\delta_B\big)+\sqrt{nV_{\text{BF}}^t}\Phi^{-1}(\epsilon'-\tau)+o(\sqrt{n}),
\end{equation}
where $V_{\text{BF}}^t\equiv V_{\text{BF}}^t(\bar{P}-\delta_B)=\mathbb{E}_{\text{G}'}\big[V(G'^2)\big]+n_c\mathbb{V} \big[C(G'^2)\big]+\frac{1}{2}\mathbb{V}\big[\mathcal{L}(G'^2)\big]$. Note that the difference between $V_{\text{BF}}^t$ and $V_{\text{BF}}(\bar{P})$ (defined in the statement of Theorem \ref{Th_ST_LB2}) is that, $V_{\text{BF}}^t$ is evaluated according to the distribution of $G'$ rather than $G$ defined in Theorem \ref{Th_ST_LB2}. {As per definition, $G'$ is evaluated such that $\mathbb{E}_{\text{H}}[\mathcal{P}_{\text{WF}}^{(\text{B})}(H)]=\bar{P}-\delta_B$.}  
 The \textit{first order term} in the lower bound obtained does not match with the channel capacity and hence we proceed to rectify this shortcoming. For convenience, let $\delta_B'' =\delta_B/\bar{P}.$ {Here, $\delta_B'' \in(0,1)$.} Note that, for $G$ as defined in Theorem \ref{Th_ST_LB2},
 \begin{flalign*}
 \mathbf{C}(\bar{P}-\delta_B)&=\mathbb{E}_{\text{G}'}\Big[C\big({G'}^2\big)\Big]\stackrel{(a)}{=}\mathbb{E}_{\text{G}}\Big[C\big(G(1-\delta_B'')\big)\Big].
 \end{flalign*}
 Here $(a)$ follows because of the following reason: $$\mathbb{E}_{\text{H}}\Big[\mathcal{P}_{\text{WF}}(H)\Big]=\bar{P} \Rightarrow \mathbb{E}_{\text{H}}\Big[\mathcal{P}_{\text{WF}}(H)\big(1-\delta_B''\big)\Big]=\bar{P}-\delta_B.$$ {For $\epsilon''>0$ and function $\eta$, Taylor's theorem applied to the \textit{functional} $F(.)$ about a certain function $f$ is given by $$F(f+\epsilon''\eta)=F(f)+\epsilon''\frac{dF(f+y\eta)}{dy}\Bigg|_{y=x},$$ {for some $x \in(0,\epsilon'')$.} Applying Taylor's theorem to the {functional} $\mathbf{C}(\bar{P}-\delta_B)$ yields} $${\mathbf{C}(\bar{P}-\delta_B)}{{=}\mathbf{C}(\bar{P})-\delta_B''\mathbb{E}_{\text{G}}\Big[{\mathcal{L}(G(1-\alpha_1\delta_B''))}\big/{2(1-\alpha_1\delta_B'')}\Big],}$$
for $\alpha_1 \in (0,1)$. Next, we apply Taylor's theorem to the functional $\sqrt{V_{\text{BF}}^t(\bar{P}-\delta_B)}${, and obtain } $$\sqrt{V_{\text{BF}}^t(\bar{P}-\delta_B)}{=}\sqrt{V_{\text{BF}}(\bar{P})}-\delta_B''F_1(\bar{P},\alpha_2,\delta_B)$$
for some $\alpha_2 \in (0,1)$. Here, $$F_1(\bar{P},\alpha_2,\delta_B)=F_2(\bar{P},\alpha_2,\delta_B)\big/{2\sqrt{V_{\text{BF}}(\bar{P}-\alpha_2\delta_B)}},$$ where, $V_{\text{BF}}(\bar{P}-\alpha_2\delta_B)$ is evaluated in the same way as $V_{\text{BF}}^t$ and $V_{\text{BF}}$, but such that $\mathbb{E}_{\text{H}}\big[\big(\lambda'-\frac{\sigma_N^2}{H^2}\big)^+\big]=\bar{P}-\alpha_2\delta_B$. Also, with $\bar{\delta}_B\triangleq(1-\alpha_2{\delta}_B'')$, $\bar{G}\triangleq G^2\bar{\delta}_B$, $\tilde{G}\triangleq (\sigma_N^2+\bar{G})$  and $L_1 \triangleq \mathbb{E}_{\text{G}}\big[{\sigma_N^4G^2}\big/{\tilde{G}^3}\big]$, $F_2(\bar{P},\alpha_2,\delta_B)\equiv F_2$ is given by
\begin{flalign*} 
L_1+n_c\text{Co}\big(C(\bar{G}),{\mathcal{L}(\bar{G})}\big/{\bar{\delta}_B}\big)+\text{Co}\big(1-\mathcal{L}(\bar{G}),{\sigma_N^2\mathcal{L}\big(\bar{G})}\big/{\tilde{G}^2}\big).
\end{flalign*} 
Here $\text{Co}(.,.)$ denotes the covariance function.
 Finally, we apply Taylor's theorem to the function $\Phi^{-1}(x),~x \in[\epsilon',\epsilon]$ so that $\Phi^{-1}(\epsilon')=\Phi^{-1}(\epsilon)-\delta_B'g_1(\epsilon,\delta_B'),$ for some $\alpha_3 \in (0,1)$ and $g_1(\epsilon,\delta_B')=\sqrt{2\pi}\text{exp}\big\{{\Phi^{-1}({\epsilon-\alpha_3\delta_B'})}\big/{\sqrt{2}}\big\}^2.$ For notational convenience, let $g_2\equiv g_2(\epsilon,\delta_B') \triangleq \delta_B'g_1(\epsilon,\delta_B')$, $F_3\equiv F_3(\bar{P},\alpha_2,\delta_B)\triangleq \delta_B''F_1(\bar{P},\alpha_2,\delta_B)$, $\hat{\delta}_B \triangleq (1-\alpha_1\delta_B'')$, $\hat{G} \triangleq \hat{\delta}_BG^2$ and $\mathcal{L}_1(\hat{\text{G}})\triangleq {\mathcal{L}(\hat{\text{G}})}{(2\hat{\delta}_B)}^{-1}$. Substituting these values in (\ref{ST_M_bnd}), we obtain 
\begin{flalign*}
\hspace{-8pt}
\log M^*&\geq n\mathbf{C}(\bar{P})-n\delta_B''\mathbb{E}_{\hat{\text{G}}}\Big[\mathcal{L}_1(\hat{\text{G}})
\Big]+\sqrt{nV_{\text{BF}}(\bar{P})}\Phi^{-1}(\epsilon)\\
&+\sqrt{n}F_3\Phi^{-1}(\epsilon)-\sqrt{n{V_{\text{BF}}}(\bar{P})}g_2{-\sqrt{n}g_2F_2+o(\sqrt{n}).}
\end{flalign*}
\par Next, we fix the choice of $\delta_B$ so as to get an exact expression for $\delta_B'$ and $\delta_B''$. To that end, we upper bound the probability of violating the constraint $\mathbf{C_{ST}}$ in the following way: 
 \begin{flalign*}
& \mathbb{P}\Bigg(\bigcup_{b=1}^{B}\Bigg\{\sum\limits_{l=1}^{b}\frac{||\underline{x}'_b||^2}{n_c}\Big({\lambda_B}-\frac{\sigma_N^2}{H_l^2}\Big)^+ > B\bar{P}\Bigg\}\Bigg)\\
 &=\mathbb{P}\Bigg(\sum\limits_{b=1}^{B}\frac{||\underline{x}'_b||^2}{n_c}\mathcal{P}_{\text{WF}}^{(\text{B})}(H_b) > B\bar{P}\Bigg)\\
 &\leq \mathbb{P}\Bigg(\sum\limits_{b=1}^{B}\Big\{\frac{||\underline{x}'_b||^2}{n_c}\mathcal{P}_{\text{WF}}^{(\text{B})}(H_b)-(\bar{P}-\delta_B)\Big\} > B\delta_B\Bigg)\\
 &=\mathbb{P}\Bigg(\sum\limits_{b=1}^{B}S_b > B\delta_B\Bigg)\stackrel{(a)}{\leq} \text{exp}{\Big(-\frac{B\delta_B^2}{2{\lambda}^2}\Big)}\stackrel{(b)}{\leq}\frac{\kappa}{B}, 
 \end{flalign*}      
 where, $S_b \triangleq \frac{||\underline{x}'_b||^2}{n_c}\mathcal{P}_{\text{WF}}^{(\text{B})}(H_b)-(\bar{P}-\delta_B) $ and $(a)$ follows from Hoeffding's inequality (\cite{boucheron2013concentration}) as $\{S_b\}$ is a set of i.i.d. random variables with zero mean such that $\sum\limits_{b=1}^{B}S_b \in [-\lambda_B,{\lambda_B}] \subseteq [-\lambda,{\lambda}]$. For some constant $\kappa$, $(b)$ follows by choosing $\delta_B=\lambda\sqrt{{2}/{{B^{(1-\alpha)}}}}$, for an arbitrary, small $\alpha>0$.  For the specified choice of $\delta_B$, by Taylor's theorem, 
 \begin{flalign*}
 \mathbb{E}_{{\hat{G}}}\Big[{\mathcal{L}(\hat{G})}{(2\hat{\delta}_B)}^{-1}\Big]=\mathbb{E}_{\text{G}}\Big[{\mathcal{L}(G)}/2\Big]-\delta_B''\mathbb{E}_{\text{G}}\Big[{\mathcal{L}_1'\big(G(1-u)\big)}\Big],
 \end{flalign*} 
 for some $u \in(0,\alpha_1\delta_B'')$ and $\mathcal{L}_1'\big(G(1-u)\big)$ denotes the first derivative of $\mathcal{L}_1\big(G(1-x)\big)$ with respect to $x$ at $x=u$. Gathering all the $o(\sqrt{n})$ terms, we obtain that $\log  M^*$ is upper bounded by $n\mathbf{C}(\bar{P})-\sqrt{{n^{(1+\alpha)}}/{2}}+\sqrt{n{V_\text{BF}}(\bar{P})}\Phi^{-1}(\epsilon)+o(\sqrt{n}).$ Since $\alpha$ is chosen arbitrarily small, the result follows.
\qed

 \section*{Appendix B}
\subsection*{Lower bound: $\mathbf{LT}$ constraint}
\label{App_Ach}
\par As in Appendix A, we fix the constraint set $\mathcal{F}^{(n)}.$ Let $\mathcal{C}({M},n,\epsilon,1)$ be a codebook with $M$ codewords, codeword length $n$, average probability of error $\epsilon$ and codewords belonging to $\mathcal{F}^{(n)}$,  for a block fading channel (same as the one in consideration) with CSIR but \textit{no} CSIT. By virtue of belonging to $\mathcal{F}^{(n)}$, the codewords inherently satisfy the $\mathbf{ST}$ constraint. We use this codebook in conjunction with a \textit{power controller} in the following way: at the beginning of block $b$, the channel gain $H_b$ for the next $n_c$ channel uses will be available to the transmitter. Fix the water filling power allocation $\mathcal{P}_{\text{WF}}(.)$ in equation (\ref{gold_eqn}). For the $k^{\text{th}}$ transmission in block $b$, the channel input $X_{[bk]}$ is chosen such that $X_{[bk]}=x_{[bk]}'\sqrt{\mathcal{P}_{\text{WF}}(H_b)}$. Note that $$\mathbb{E}_{\mathbf{H}}\Big[\sum\limits_{b=1}^{B}\sum\limits_{k=1}^{n_c}\mathcal{P}_{\text{WF}}(H_b){{x}_{[bk]}'}^2\Big]=\mathbb{E}_{\text{H}}\big[\mathcal{P}_{\text{WF}}(H)\big]||\mathbf{x}'||^2 =  n\bar{P}.$$ For each joint fading state $\mathbf{H} \in \mathbb{R}^B$, the above scheme generates a codebook satisfying $\mathbf{LT}$ constraint for the channel with CSIT and CSIR \textit{on the fly}, in a causal manner. The channel output $$Y_{[bk]}=H_{b}\sqrt{\mathcal{P}_{\text{WF}}(H_b)}x_{[bk]}'+Z_{[bk]}.$$ 
 \par Now, consider a block fading channel with i.i.d. fading process $\Big\{H_{b}\sqrt{\mathcal{P}_{\text{WF}}(H_b)}\Big\}$. Assume no CSIT and full CSIR. Any rate achievable over such a channel with $\mathbf{ST}$ constraint  is achievable over the original channel (with full CSIT and CSIR) subject to the $\mathbf{LT}$ constraint as well.  It follows from the analysis in \cite{polyanskiy2011scalar} that $$\log M^*\geq  n\mathbf{C}(\bar{P})+\sqrt{nV_{\text{BF}}(\bar{P})}\Phi^{-1}(\epsilon)+ o(\sqrt{n}).$$ \qed
 
\section*{Appendix C}
\subsection*{Upper bound: $\mathbf{ST}$ constraint}
Let $M^* \equiv M^*(n,\epsilon,\bar{P})$ denote the maximal codebook size for the channel with average probability of error $\epsilon and$ satisfying the $\mathbf{ST}$ constraint. In particular, assume that the $\mathbf{ST}$ constraint is satisfied with equality. Note that we can make this assumption without loss of optimality (see, for instance, \cite{polyanskiy2010channel1}, Lemma 65). Next, assume that CSIT is known non-causally. Bound derived under this assumption will give a valid upper bound for the causal case under consideration. Corresponding to each codeword $\mathbf{x}$ (satisfying $\mathbf{ST}$ constraint) and $\mathbf{h}$ (obtained non causally), we can identify a power allotment vector $\mathcal{P}(\mathbf{x},\mathbf{h})\triangleq \big(\mathcal{P}_1(\mathbf{x},\mathbf{h}),\hdots,\mathcal{P}_{|\mathcal{H}|}(\mathbf{x},\mathbf{h})\big)$, where,
\begin{flalign}
\label{pow_alloc_defn}
  \mathcal{P}_i(\mathbf{x},\mathbf{h})= \left\{
     \begin{array}{lr}
      0 & ~ ,~ n_i(\mathbf{h}) = 0,\\
      \frac{1}{n_i(\mathbf{h})}\sum\limits_{j=1}^{n_i(\mathbf{h})}{x_{k_j(i,\mathbf{h})}^2}        &~ , ~ n_i(\mathbf{h}) > 0.
     \end{array}
   \right.
\end{flalign}
Here, $n_i(\mathbf{h})=n_c\sum\limits_{b=1}^{B}\mathbf{1}_{\{h_b=\eta_i\}}$ and $k_{j}(i,\mathbf{h})=\{k\in[1:n]:h_{k}=\eta_i\} $. Also, $\mathcal{P}(\mathbf{x},\mathbf{h})$ is such that $\sum\limits_{i=1}^{|\mathcal{H}|}t_i(\mathbf{h})\mathcal{P}_i(\mathbf{x},\mathbf{h})=\bar{P},$ where $t_i(\mathbf{h})\equiv t_i={n_i(\mathbf{h})}/{n}$.  Since we assume perfect CSIT and CSIR, let $\bar{\mathbf{X}}=(\mathbf{X},\mathbf{H})$ and $\bar{\mathbf{Y}}=(\mathbf{Y},\mathbf{H})$ denote the equivalent channel input and output, respectively. Then, we have (\cite{polyanskiy2010channel}, Theorem 26) that $$\beta_{1-\epsilon}(P_{\mathbf{X},\mathbf{H},\mathbf{Y}},Q_{\mathbf{X},\mathbf{H},\mathbf{Y}})\leq 1-\epsilon',$$ where {$\beta_{\alpha}(P,Q)$ is the minimum \textit{false alarm probability} in deciding between $P$ and $Q$, subject to a minimum \textit{detection probability} $\alpha>0$}, $Q$ denotes {an} auxiliary channel and $\epsilon'$ is the average probability of error  for the auxiliary channel. {We choose the auxiliary channel where the channel output $\mathbf{Y}$ has the distribution} $$Q_{\mathbf{X},\mathbf{H},\mathbf{Y}}=\prod_{b=1}^{B}\prod_{k=1}^{n_c}\mathcal{N}\Big(0,\sigma_N^2+H_b^2\mathcal{P}_{\text{WF}}(H_b)\Big).$$ Here, $\mathbf{E}_{\text{H}}[\mathcal{P}_{\text{WF}}(H)]=\bar{P}.$ Since the message $M\in\mathcal{M}$ is independent of $\mathbf{H}$, the output of the auxiliary channel is independent of $M$ and hence 
\begin{equation*}
\hspace{20pt}
\beta_{1-\epsilon}(P_{\mathbf{X},\mathbf{H},\mathbf{Y}},Q_{\mathbf{X},\mathbf{H},\mathbf{Y}})\leq 1-\Big(1-\frac{1}{M^*}\Big).
\end{equation*} 
 Next, $\beta_{1-\epsilon}$ can be lower bounded (as in \cite{polyanskiy2010channel}, equation (102)) as,  $\beta_{1-\epsilon}\geq \frac{1}{\gamma}\Big(\mathbb{P}\big[ \mathcal{I}_\gamma\big]-\epsilon\Big)$, where $\gamma>0$, $\mathcal{I}_\gamma \triangleq \Big\{i(\mathbf{X},\mathbf{H},\mathbf{Y})\leq \log \gamma\Big\}$ and $ i(\mathbf{X},\mathbf{H},\mathbf{Y}) \triangleq \log \frac{P_{\mathbf{X},\mathbf{H}, \mathbf{Y}}}{Q_{\mathbf{X},\mathbf{H},\mathbf{Y}}}.$
 \par {Optimizing over the choice of input distributions, for the auxiliary channel under consideration, we have
 \begin{equation}
 \label{conv_ST_ineq}
 \log M^* \leq \log \gamma -\inf\limits_{F_{\mathbf{X}}}\log\Big(P\big[\mathcal{I}_\gamma \big]-\epsilon\Big)^+.
\end{equation}  
 With the prescribed choice of the auxiliary channel, for $\mathbf{H}=\mathbf{h}$, $\mathbf{X}=\mathbf{x}$, under $P_{\mathbf{X},\mathbf{H},\mathbf{Y}}$, $ i(\mathbf{x},\mathbf{h},\mathbf{Y}) \stackrel{D}{=} \sum\limits_{b=1}^{B}W_b',$ where,
 \begin{flalign*}
 \hspace{-10pt}
W_b'={n_c}C(g_b^2)+\frac{\sigma_N^2h_b^2||\underline{x}_b||^2+2h_b\sigma_N^2\langle\underline{x}_b,\underline{Z}_b\rangle-g_b^2||\underline{Z}_b||^2}{2\sigma_N^2\big(\sigma_N^2+g_b^2\big)},
 \end{flalign*}
and $g_b=h_b\sqrt{\mathcal{P}_{\text{WF}}(h_b)}$.
Next, it can be seen that  $W_b'\stackrel{D}{=}W_b''(\mathcal{P}_{h_b}(\mathbf{x},\mathbf{h}))\equiv W_b''$, where, for $p\geq 0$,
\begin{flalign*}
 \hspace{-10pt}
W_b''(p)&\stackrel{D}{=}{n_c}C(g_b^2)+\frac{n_c\sigma_N^2h_b^2p+2h_b\sqrt{p}\sigma_N^2\langle\underline{1}_b,\underline{Z}_b\rangle-g_b^2||\underline{Z}_b||^2}{2\sigma_N^2\big(\sigma_N^2+g_b^2\big)},
 \end{flalign*}
 where, $\mathcal{P}_{h_b}(\mathbf{x},\mathbf{h})$ is as defined in (\ref{pow_alloc_defn}) and $\underline{1}_b$ is the $n_c$ length vector $(1,1,\hdots,1)$ for all $b$.
 Note that $\{W_b''\}$ are independent random variables. Let $F_{\mathbf{h}}\equiv F_{\mathbf{X}\vert\mathbf{H}=\mathbf{h}}$ denote the conditional distribution of $\mathbf{X}$ given $\mathbf{H}=\mathbf{h}.$ Then, $\mathbb{P}\big[ \mathcal{I}_\gamma \big]$
 \begin{flalign}
 \label{prob_LB_seq1}
  &=\sum\limits_{\mathbf{h}\in\mathcal{H}^B}\mathbb{P}(\mathbf{h})\int_{\mathbb{R}^n } \mathbb{P}\Big[ i(\mathbf{x},\mathbf{h},\mathbf{Y})\leq \log \gamma\Big\vert\mathbf{x},\mathbf{h} \Big]dF_{{\mathbf{h}}}(\mathbf{x})\nonumber\\
&=\sum\limits_{\mathbf{h}\in\mathcal{H}^B}\mathbb{P}(\mathbf{h})\int_{\mathbb{R}^n } \mathbb{P}\Bigg[ \sum_{b=1}^{B}W_b' \leq \log \gamma\Bigg\vert\mathbf{x},\mathbf{h} \Bigg]dF_{\mathbf{h}}(\mathbf{x})\nonumber\\
&\stackrel{(a)}{\geq}\sum\limits_{\mathbf{h}\in\mathcal{H}^B}\mathbb{P}(\mathbf{h})\int_{\mathbb{R}^n }\inf\limits_{\mathbf{x}\in\mathcal{S}_{\mathbf{h}}} \mathbb{P}\Bigg[ \sum_{b=1}^{B}W_b' \leq \log \gamma\Bigg\vert\mathbf{x},\mathbf{h} \Bigg]dF_{\mathbf{h}}(\mathbf{x}),
\end{flalign}
where, in $(a)$, $\mathcal{S}_\mathbf{h}$ denotes the support of $F_{\mathbf{h}}$.
Next, note that, 
\begin{flalign}
\label{prob_LB_seq2}
&\inf\limits_{\mathbf{x}\in\mathcal{S}_{\mathbf{h}}} \mathbb{P}\Bigg[ \sum_{b=1}^{B}W_b' \leq \log \gamma\Bigg\vert\mathbf{x},\mathbf{h} \Bigg] \nonumber \\
&\stackrel{(a)}{=} \inf\limits_{\mathcal{P}(\mathbf{h})\in\Pi_{\mathbf{h}}} \mathbb{P}\Bigg[ \sum_{b=1}^{B}W_b''\big(\mathcal{P}_{h_b}(\mathbf{h})\big)\leq \log \gamma\Big\vert \mathbf{h} \Bigg]\nonumber\\
 &\stackrel{(b)}{=}\inf\limits_{\mathcal{P}(\mathbf{h})\in\Pi_{\mathbf{h}}} \mathbb{P}\Bigg[ \frac{1}{{\sqrt{\nu_B}}}\sum_{b=1}^{B}W_b\leq \gamma_B'\big(\mathbf{h},\mathcal{P}(\mathbf{h})\big)\Bigg\vert\mathbf{h} \Bigg]\nonumber \\
 &\stackrel{(c)}{\geq} \inf\limits_{\mathcal{P}(\mathbf{h})\in\Pi_{\mathbf{h}}} \Big[\Phi\Big(\gamma_B'\big(\mathbf{h},\mathcal{P}(\mathbf{h})\big)\Big)-\mathcal{B}\big(\mathbf{h},\mathcal{P}(\mathbf{h})\big)\Big]\nonumber \\
 &\stackrel{(d)}{\geq} \inf\limits_{\mathcal{P}(\mathbf{h})\in\Pi_{\mathbf{h}}} \Phi\Big(\gamma_B'\big(\mathbf{h},\mathcal{P}(\mathbf{h})\big)\Big)- \sup\limits_{\mathcal{P}(\mathbf{h})\in\Pi_{\mathbf{h}}}\mathcal{B}_1\big(\mathbf{h},\mathcal{P}(\mathbf{h})\big)\nonumber \\
 &\stackrel{(e)}{= }\min\limits_{\mathcal{P}(\mathbf{h})\in\Pi_{\mathbf{h}}} \Phi\Big(\gamma_B'\big(\mathbf{h},\mathcal{P}(\mathbf{h})\big)\Big)- \max\limits_{\mathcal{P}(\mathbf{h})\in\Pi_{\mathbf{h}}}\mathcal{B}_1\big(\mathbf{h},\mathcal{P}(\mathbf{h})\big)\nonumber \\
  &\stackrel{(f)}{= }\Phi\Big(\gamma_B'\big(\mathbf{h},\mathcal{P}_1^*(\mathbf{h})\big)\Big)- \mathcal{B}_1\big(\mathbf{h},\mathcal{P}_2^*(\mathbf{h})\big)
 \end{flalign}
 where, $(a)$ follows from the fact that $W_b'\stackrel{D}{=}W_b''$ and the definition of $W_b''$ and $\Pi_\mathbf{h}$, where, 
 $$\Pi_\mathbf{h}\triangleq \Big\{\mathcal{P}(\mathbf{h})\in \mathbb{R}^{|\mathcal{H}|}:\sum\limits_{i=1}^{|\mathcal{H}|}t_i(\mathbf{h})\mathcal{P}_i({\mathbf{h}})=\bar{P}\Big\}.$$ In $(b)$, $W_b=(W_b''\big(\mathcal{P}_{h_b}(\mathbf{h})\big)-\mu_b)$, $\gamma_B'\big(\mathbf{h},\mathcal{P}(\mathbf{h})\big)\equiv \gamma_B'(\mathbf{h})=({\log \gamma-\mu(\mathbf{h})})/{\sqrt{\nu(\mathbf{h})}}$, the mean of $S_B({W'})$ (as noted, $S_B(.)$ is our alternate notation for summation), $\mu(\mathbf{h})=S_B({\mu_B})$, where, $\mu_b \equiv \mu_{h_b}(\mathbf{h}) ={n_c}C(g_b^2)+{(h_b^2n_c\mathcal{P}_{h_b}(\mathbf{h})-g_b^2n_c)}\big/{2\big(\sigma_N^2+g_b^2\big)}.$ Similarly,  the variance of $S_B({W'})$, $\nu(\mathbf{h})=S_B({\nu})$, where, 
 \begin{equation}
 \label{variance_rel}
 \nu_b\equiv \nu_{h_b}(\mathbf{h})=\big({2\sigma_N^2h_b^2n_c\mathcal{P}_{h_b}(\mathbf{h})+n_cg_b^4}\big)\big/{2\big(\sigma_N^2+g_b^2\big)^2}.
 \end{equation}  Next, $(c)$ follows from Berry Esseen Theorem for independent, but not identically distributed random variables (\cite{feller}, Chapter 16) with $\mathcal{B}(\mathbf{h})\equiv\mathcal{B}\big(\mathbf{h},\mathcal{P}(\mathbf{h})\big)$, where,
\begin{equation}
\label{Berry}
\mathcal{B}(\mathbf{h})={C\tau(\mathbf{h})}\big/{\big(\nu(\mathbf{h})\big)^{3/2}},~\tau(\mathbf{h})\triangleq \sum\limits_{b=1}^{B}\mathbb{E}_{\underline{Z}_b}\big[|W_b|^3\big],
\end{equation} 
  for some constant $C>0$ (not depending on $\mathbf{h})$. The term $\mathcal{B}_1\big(\mathbf{h},\mathcal{P}(\mathbf{h})\big)$ in (d) is defined in the proof of Lemma \ref{lem_approx_1} provided in Appendix E. Also, from the definition therein, $\mathcal{B}_1\big(\mathbf{h},\mathcal{P}(\mathbf{h})\big)\geq \mathcal{B}\big(\mathbf{h},\mathcal{P}(\mathbf{h})\big)$. Together with the fact that, taking infimum of the first term and supremum of the second term (on the right hand side of  $(c)$) separately can only lower the value, $(d)$ follows.  In $(e)$, the infimum and supremum are replaced with minimum and maximum respectively, by noting the following. The functions $\mathcal{B}_1$ (from the definition in Lemma \ref{lem_approx_1} in Appendix E )) and $\Phi$, are continuous in $\mathcal{P}(\mathbf{h})$. Also, $\Pi_\mathbf{h}\subset\mathbb{R}^{|\mathcal{H}|}$ is a compact set. Hence the minimum and the maximum respectively, are attained. Also, note that the minimum in $(e)$ is not 0 due to the equality constraint in the definition of $\Pi_\mathbf{h}$.  Finally, in $(f)$, $\mathcal{P}_1^*(.)$ and $\mathcal{P}_2^*(.)$ denote the power allocations that attain the minimum and maximum, respectively. 
\par From (\ref{prob_LB_seq1}) and (\ref{prob_LB_seq2}), we have that
\begin{flalign}
\label{Prob_I_gamma_LB}
\mathbb{P}\big[I_\gamma\big]&\geq \sum_{\mathbf{h}\in \mathcal{H}^B}\mathbb{P}(\mathbf{h})\Big[\Phi\Big(\gamma_B'\big(\mathbf{h},\mathcal{P}_1^*(\mathbf{h})\big)\Big)- \mathcal{B}_1\big(\mathbf{h},\mathcal{P}_2^*(\mathbf{h})\big)\Big]\nonumber \\
&\stackrel{(a)}{\geq}\mathbb{E}_{\mathbf{H}}\Big[\Phi\Big(\gamma_B'\big(\mathbf{H},\mathcal{P}_1^*(\mathbf{H})\big)\Big)\Big]-\frac{C_2}{\sqrt{n}},
\end{flalign} 
where, $(a)$ follows from  Lemma \ref{lem_approx_1} stated and proved in Appendix E. Here, $C_2$ is a universal positive constant. 
}
 \par Next, we proceed to obtain a tractable lower bound for  $\gamma_B'\big(\mathbf{H},\mathcal{P}_1^*(\mathbf{H})\big)$. To that end, we make use of the following two facts. First, since $\mathcal{P}_{\text{WF}}(h_b)\geq (\lambda-\frac{\sigma_N^2}{h_b^2}\Big)$,  
\begin{equation}
\label{rel_WF_ineq}
\big(\sigma_N^2+g_b^2\big)\geq \Big(\sigma_N^2+h_b^2\big(\lambda-{\sigma_N^2}\big/{h_b^2}\big)\Big)= \lambda h_b^2.
\end{equation}
 Also, using the fact that $\mathcal{P}_{\text{WF}}(h_b)=\mathbf{1}_{\{h_b^2\geq \lambda\}}\mathcal{P}_{\text{WF}}(h_b)$, it can be seen that
\begin{flalign}
\label{rel_waterfill}
\mathcal{L}(g_b^2)=\frac{\mathcal{P}_{\text{WF}}(h_b)}{\lambda}.
\end{flalign}
Using the above two relationships, it can be seen that
\begin{flalign*}
{n_c}C(g_b^2)+\frac{h_b^2n_c\mathcal{P}_{1,h_b}^*(\mathbf{h})-g_b^2n_c}{2\big(\sigma_N^2+g_b^2\big)}\leq \mu_b'(h_b),
\end{flalign*}
where, $\mu_b'(h_b)\equiv \mu_b'={n_c}C(g_b^2)+{\big(n_c\bar{P}-n_c\mathcal{P}_{\text{WF}}(h_b)\big)}/{2\lambda}$. Define $\sum\limits_{b=1}^{B}\mu_b'(h_b)\triangleq \mu'(\mathbf{h})$. Let  $R \triangleq \log \gamma /n$, $\mu''(\mathbf{h})=\mu'(\mathbf{h})/n$, $\nu'\big(\mathbf{h},\mathcal{P}_1^*(\mathbf{h})\big)=\nu\big(\mathbf{h},\mathcal{P}_1^*(\mathbf{h})\big)/n$. Then, 
\begin{equation}
\label{Psi_def}
\Phi\Big(\gamma_B'\big(\mathbf{h},\mathcal{P}_1^*(\mathbf{h})\big)\Big)\geq \Phi\Bigg(\sqrt{n}\frac{R-\mu''(\mathbf{h})}{\sqrt{\nu'\big(\mathbf{h},\mathcal{P}_1^*(\mathbf{h})\big)}}\Bigg)\triangleq \Psi(\mathbf{h}).
\end{equation} 
Here, the inequality follows from the fact that $\Phi(.)$ is a monotonically increasing function. 
\par Now, using Lemma \ref{lem_Berry} provided in Appendix F, we obtain, $$\mathbb{E}_{\mathbf{H}}\big[{\Psi(\mathbf{H})}\big]\geq \Phi\Bigg(\sqrt{n}\frac{R- \mathbf{C}(\bar{P})}{\sqrt{V_{\text{BF}}'(\bar{P})}}\Bigg)-C_3\frac{\log n}{n},$$ where, $V_{\text{BF}}'(\bar{P})$ is defined as in the statement of Theorem \ref{Th_ST_UB1}. Choose $R=\mathbf{C}(\bar{P})+\sqrt{\frac{V_{\text{BF}}'(\bar{P})}{n}}\Phi^{-1}\Big(\epsilon+C_3\frac{\log n}{n}+\frac{2C2}{\sqrt{n}}\Big)$. With the prescribed choice of $R$, combining (\ref{conv_ST_ineq}) and (\ref{Prob_I_gamma_LB}) and applying Taylor's theorem to the function $\Phi^{-1}(.)$, we obtain, ${\log M}^*\leq n\mathbf{C}(\bar{P})+\sqrt{nV_{\text{BF}}'(\bar{P})}\Phi^{-1}(\epsilon)+O({\log n}).$ \qed
\section*{Appendix D}
\subsection*{Upper bound: $\mathbf{LT}$ constraint }
Let $M^* \equiv M^*(n,\epsilon,\bar{P})$ denote the maximal codebook size for the channel with average probability of error $\epsilon$ satisfying the $\mathbf{LT}$ constraint. In particular, as in the $\mathbf{LT}$ case, assume that the $\mathbf{LT}$ constraint is satisfied with equality. As mentioned, this assumption can be made without loss of optimality (see, for instance, \cite{polyanskiy2010channel1}, Lemma 65). Next, assume that CSIT is known non-causally. Bound derived under this assumption will give a valid upper bound for the causal case under consideration.
\par While $\mathbf{ST}$ constraints insists that $||\mathbf{x}||^2=n\bar{P}$ for \textit{every} fading realization $\mathbf{h}$ and channel input vector $\mathbf{x}$, the $\mathbf{LT}$ constraint only requires $||\mathbf{x}||^2=n\mathcal{P}(\mathbf{h})$, for some function $\mathcal{P}:\mathcal{H}^B\rightarrow \mathbb{R}_+$ such that $\mathbb{E}_{\mathbf{H}}[\mathcal{P}(\mathbf{H})]=\bar{P}$. Fix $\delta_n>0$, $\kappa>0$ and $c>\bar{P}+\delta$. For deriving the upper bound, we will first restrict to the class of \textit{policies} ${{\Pi_{\kappa,c}}}$, where, $${\Pi_{\kappa,c}}\triangleq \Big\{\mathcal{P}:\mathcal{H}^B\rightarrow \big[0,\kappa+c\big]: \mathbb{E}_{\mathbf{H}}[\mathcal{P}(\mathbf{H})]=\bar{P}\Big\}.$$ Note that, for any $\mathcal{P}:\mathcal{H}^B\rightarrow \mathbb{R}_+$ such that $\mathbb{E}_{\mathbf{H}}[\mathcal{P}(\mathbf{H})]=\bar{P}$, $\hat{\mathcal{P}}(\mathbf{h})=\min\{\kappa+c,\mathcal{P}(\mathbf{h})\}$ belongs to $\Pi_{\kappa,c}$. As $\kappa\rightarrow \infty$, $\Pi_{\kappa,c}$ converges to the original class of policies on $\mathbb{R}_+$. Any $\mathcal{P}\in \Pi_{\kappa,c}$ is a function of the realizations of an i.i.d. random vector $\mathbf{H}$, such that $0\leq \mathcal{P}(\mathbf{H})\leq \kappa+c$ (where, the inequalities hold in a.s. sense). Hence, in particular, $\mathcal{P}$ satisfies the \textit{bounded difference} condition (\cite{boucheron2013concentration}, Chapter 3). Let $\mathcal{E}_n\triangleq\{\mathbf{h}:\mathcal{P}(\mathbf{h})>\bar{P}+\delta_n\}$. From McDiarmid's inequality (\cite{boucheron2013concentration}),
\begin{equation}
\label{Conc_bound_LT}
\mathbb{P}(\mathcal{E}_n)\leq \delta_n',~ \delta_n'\equiv \delta_n'(\kappa,c)\triangleq \exp \Big({-\frac{2n\delta_n^2}{(\kappa+c)^2}}\Big).
\end{equation}
Note that, under the event $\mathcal{E}_n^c$, the channel input vector $\mathbf{x}$ satisfies the $\mathbf{ST}$ constraint with power $\bar{P}+\delta_n$. Hence, in deriving the upper bound for the $\mathbf{LT}$ case, we can lower bound the probability term $\mathbb{P}\big[I_{\gamma}\big]$ in (\ref{conv_ST_ineq}) by $\mathbb{P}\big[I_\gamma\cap\mathcal{E}_n^c \big]$. It follows from the analysis for the $\mathbf{ST}$ case that $$\mathbb{P}\big[I_\gamma\cap\mathcal{E}_n^c\big]\geq \mathbb{P}(\mathcal{E}_n^c)\mathbb{P}\Bigg(Z\leq \sqrt{n}\frac{R-\mathbf{C}(\bar{P}+\delta_n)}{\sqrt{V_{\text{BF}}'(\bar{P}+\delta_n)}}\Big\vert\mathcal{E}_n^c \Bigg)-C_3\frac{\log n}{n},$$ where, $Z\sim \mathcal{N}(0,1)$.
 Using the fact that, for events $A,~B$ $\mathbb{P}(A\cap B)\geq \mathbb{P}(A)-\mathbb{P}(B^c)$, we obtain $$\mathbb{P}\big[I_\gamma\big]\geq \Phi\Bigg(\sqrt{n}\frac{R-\mathbf{C}(\bar{P}+\delta_n)}{\sqrt{V_{\text{BF}}'(\bar{P}+\delta_n)}}\Bigg)-\delta_n'-C_3\frac{\log n}{n}.$$ Now, we will choose $n$ so that $\epsilon+\delta_n'<\frac{1}{2}$. Noting the fact that $V_{\text{BF}}(\bar{P})$ is monotonically increasing in $\bar{P}$, adopting the same lines of arguments as in the $\mathbf{ST}$ case, we obtain, $$\frac{\log M}{n}^*\leq \mathbf{C}(\bar{P}+\delta_n)+\sqrt{\frac{V_{\text{BF}}'(\bar{P})}{n}}\Phi^{-1}\big(\epsilon+\delta_n'(\kappa,c)\big)+O\Big(\frac{\log n}{n}\Big).$$ Applying Taylor's theorem to $\mathbf{C}(\bar{P}+\delta_n)$, for $\delta_n=n^{\frac{-(1-\alpha)}{2}},$ for some small $\alpha>0$, as in Appendix A, we obtain $\mathbf{C}(\bar{P}+\delta_n)\leq \mathbf{C}(\bar{P})+\sqrt{\frac{1}{4\lambda^2n^{(1-\alpha)}}}$. By virtue of choice of $\delta_n$, $\delta_n'=\exp\big(-2n^\alpha/(\kappa+c)^2\big)$.  Thus, $\frac{\log M^*}{n}$ is upper bounded by  $$ \mathbf{C}(\bar{P})+{\frac{1}{2\lambda n^{\frac{(1-\alpha)}{2}}}}+\sqrt{\frac{V_{\text{BF}}'(\bar{P})}{n}}\Phi^{-1}\big(\epsilon+\delta_n'(\kappa,c)\big)+O\Big(\frac{\log n}{n}\Big).$$ Note that, $\delta_n'(\kappa,c)$ is monotonically increasing in $\kappa$. Since the bound holds good for all $\kappa$, we can take the infimum over $\kappa$ to obtain the tightest bound. Next, from Taylor's theorem, $\Phi^{-1}(\epsilon+\delta_n')\leq \Phi^{-1}(\epsilon)+\delta_n'\big(1/\phi(\Phi^{-1}(\epsilon)\big)$. Finally, performing Taylor's expansion of the function $({1}\big/{2\lambda\sqrt{n^{(1-\alpha)}})}+\sqrt{{V_{\text{BF}}'(\bar{P})}\big/[n{\phi^2(\Phi^{-1}(\epsilon))]}}\exp(-2n^\alpha/c^2)$ around $\alpha=0$ yields the required result.
\section*{Appendix E}
\begin{lem}
\label{lem_approx_1}
Let $\Psi(\mathbf{h})\equiv \Psi\big(\mathbf{h},\nu'\big(\mathbf{h},\mathcal{P}_1^*(\mathbf{h})\big)\big)$ be as defined in (\ref{Psi_def}) and $V_{\text{BF}}'(\bar{P})$ as in the statement of Theorem \ref{Th_ST_UB1}. Then,
$$\mathbb{E}_{\mathbf{H}}\big[{\Psi(\mathbf{H})}\big]\geq \Phi\Bigg(\sqrt{n}\frac{R- \mathbf{C}(\bar{P})}{\sqrt{V_{\text{BF}}'(\bar{P})}}\Bigg)-C_3\frac{\log n}{n},$$
 where, $C_3$ is some positive constant. . 
\end{lem}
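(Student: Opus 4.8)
The plan is to reduce the averaged Gaussian tail $\mathbb{E}_{\mathbf{H}}[\Psi(\mathbf{H})]$ to a single Gaussian tail with the combined dispersion $V_{\text{BF}}'(\bar{P})$, by showing that the numerator $\mu'(\mathbf{H})=\sum_{b=1}^{B}\mu_b'(H_b)$ concentrates around $n\mathbf{C}(\bar{P})$ and by folding its fluctuations, together with those of the variance proxy $\nu(\mathbf{H})$, into the dispersion. First I would pin down the first-order term. Substituting (\ref{rel_waterfill}) into the definition of $\mu_b'$ gives $\mu_b'=n_cC(g_b^2)+\frac{n_c\bar{P}}{2\lambda}-\frac{n_c\mathcal{L}(g_b^2)}{2}$, and since $g_b^2\stackrel{D}{=}G$ with $\mathbb{E}[\mathcal{L}(G)]=\mathbb{E}_{\mathrm{H}}[\mathcal{P}_{\text{WF}}(H)]/\lambda=\bar{P}/\lambda$ by the water-filling normalization in (\ref{gold_eqn}), the two terms $\frac{n_c\bar{P}}{2\lambda}$ and $\frac{n_c\mathcal{L}(G)}{2}$ cancel in mean. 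Hence $\mathbb{E}[\mu_b']=n_c\mathbb{E}[C(G)]=n_c\mathbf{C}(\bar{P})$ and $\mathbb{E}[\mu'(\mathbf{H})]=n\mathbf{C}(\bar{P})$; this is precisely why the leading term of the lemma is $\mathbf{C}(\bar{P})$.

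Next I would pass to a probabilistic representation. Introducing an auxiliary $Z\sim\mathcal{N}(0,1)$ independent of $\mathbf{H}$ and using $\Phi(x)=\mathbb{P}(Z\le x)$ with $R=\log\gamma/n$, the definition (\ref{Psi_def}) rewrites as $\Psi(\mathbf{h})=\mathbb{P}\big(\sqrt{\nu(\mathbf{h})}\,Z+\mu'(\mathbf{h})\le nR\,\big|\,\mathbf{h}\big)$, so that $\mathbb{E}_{\mathbf{H}}[\Psi(\mathbf{H})]=\mathbb{P}\big(\sqrt{\nu(\mathbf{H})}\,Z+\mu'(\mathbf{H})\le nR\big)$ is a single tail probability. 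The target $\Phi\big(\sqrt{n}(R-\mathbf{C}(\bar{P}))/\sqrt{V_{\text{BF}}'(\bar{P})}\big)$ is the tail of a Gaussian with mean $n\mathbf{C}(\bar{P})$ and variance $nV_{\text{BF}}'(\bar{P})$, so the lemma becomes a Berry--Esseen comparison between $\sqrt{\nu(\mathbf{H})}Z+\mu'(\mathbf{H})$ and $\mathcal{N}\big(n\mathbf{C}(\bar{P}),nV_{\text{BF}}'(\bar{P})\big)$. The variance identity I must verify along the way is $\mathbb{V}[\mu'(\mathbf{H})]/n+\mathbb{E}[\nu(\mathbf{H})]/n=V_{\text{BF}}'(\bar{P})$, i.e. the between-block dispersion $n_c\mathbb{V}[C(G)-\mathcal{L}(G)/2]$ of the numerator plus the averaged within-block dispersion $\mathbb{E}[V(G)]$ (read off from (\ref{variance_rel}) and the definition of $V(\cdot)$, noting $\nu_b=n_cV(g_b^2)$ at the water-filling power) sum to $V_{\text{BF}}'(\bar{P})$; this matches the statement of Theorem~\ref{Th_ST_UB1} for the $n_c=1$ regime used in the numerics.

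The hard part will be that $\nu(\mathbf{h})=\sum_b\nu_b$ is evaluated at the \emph{minimizing} allocation $\mathcal{P}_1^*(\mathbf{h})$ of (\ref{prob_LB_seq2}), and through (\ref{variance_rel}) each $\nu_b$ depends on $\mathcal{P}_{h_b}(\mathbf{h})$, which couples all $B$ blocks through the constraint $\sum_i t_i\mathcal{P}_i=\bar{P}$. Consequently the summands $\sqrt{\nu_b}\,Z_b+\mu_b'$ (writing $\sqrt{\nu(\mathbf{h})}Z\stackrel{D}{=}\sum_b\sqrt{\nu_b}\,Z_b$) are \emph{not} independent, and no off-the-shelf Berry--Esseen applies. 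I would resolve this exactly as advertised in the introduction, by exploiting the water-filling inequalities (\ref{rel_WF_ineq}) and (\ref{rel_waterfill}) to sandwich $\nu(\mathbf{h})$ between the proxy $\tilde{\nu}(\mathbf{h})=\sum_b n_cV(g_b^2)$, which is a genuine sum of i.i.d. functions of $H_b$, and a controlled perturbation; I then replace $\nu$ by $\tilde{\nu}$ inside $\Phi$ using monotonicity of $\Phi$ and the sign of $nR-\mu'(\mathbf{h})$, which is negative because $\epsilon<1/2$ forces $R<\mathbf{C}(\bar{P})$ to leading order. Establishing this decoupling and bounding the replacement error is where I expect the real work to lie.

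Finally, with the i.i.d. structure restored I would account for the two sources of fluctuation separately to obtain the $O(\log n/n)$ rate. Conditioning on $\mathbf{H}$, the normalized sum $A_n=(\mu'(\mathbf{H})-n\mathbf{C}(\bar{P}))/\sqrt{n}$ enters the numerator, and the Gaussian identity $\mathbb{E}_A[\Phi((c-A)/s)]=\Phi\big(c/\sqrt{s^2+\sigma_A^2}\big)$ folds its variance into the denominator, its non-Gaussianity costing only an $O(1/\sqrt{n})$ Berry--Esseen term which is the separately tracked $C_2/\sqrt{n}$ of (\ref{Prob_I_gamma_LB}). The $\log n/n$ rate itself comes from the variance proxy: $\tilde{\nu}(\mathbf{H})/n$ concentrates around $\mathbb{E}[V(G)]$ with mean-zero fluctuations of size $1/\sqrt{n}$, so a second-order Taylor expansion of $x\mapsto\Phi(\cdot/\sqrt{x})$ has a vanishing first-order contribution and an $O(1/n)$ second-order contribution, the extra $\log n$ factor arising from truncating the atypical event $\{|\tilde{\nu}(\mathbf{H})/n-\mathbb{E}[V(G)]|>c\sqrt{\log n/n}\}$, whose probability is $O(1/n)$ by McDiarmid/Hoeffding as in (\ref{Conc_bound_LT}). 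Collecting the bias yields $\mathbb{E}_{\mathbf{H}}[\Psi(\mathbf{H})]\ge\Phi\big(\sqrt{n}(R-\mathbf{C}(\bar{P}))/\sqrt{V_{\text{BF}}'(\bar{P})}\big)-C_3\log n/n$, which is the assertion.
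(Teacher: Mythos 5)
Your skeleton is right and you correctly diagnose both the variance decomposition ($V_{\text{BF}}'(\bar{P})=\mathbb{E}[V(G)]+\mathbb{V}[\mu_b']/n_c$ after using (\ref{rel_waterfill}) to rewrite $\mu_b'$, with $\nu_b=n_cV(g_b^2)$ at the water-filling power) and the central obstruction, namely that $\nu'$ is evaluated at the adversarial allocation $\mathcal{P}_1^*(\mathbf{h})$, which couples the blocks. But the two steps that actually constitute the proof are missing. First, the decoupling: you propose to ``sandwich'' $\nu$ between the i.i.d.\ proxy $\sum_b n_cV(g_b^2)$ and a controlled perturbation using (\ref{rel_WF_ineq})--(\ref{rel_waterfill}) together with monotonicity of $\Phi$ and the sign of $nR-\mu'(\mathbf{h})$. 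That sign is not constant over typical $\mathbf{h}$ (both $nR-n\mathbf{C}(\bar{P})$ and the fluctuation of $\mu'(\mathbf{h})$ are $\Theta(\sqrt{n})$), so a one-sided replacement inside $\Phi$ does not go through, and the water-filling inequalities by themselves say nothing about how far $\mathcal{P}_1^*$ can be from $\mathcal{P}_{\text{WF}}$. The paper's actual mechanism is different: it Taylor-expands $\Psi$ in the variance argument around a deterministic $\nu_1$, restricts to the high-probability type class $\mathcal{T}_B=\{\max_i|t_i(\mathbf{h})-q_i|\le\sqrt{(\log B)/B}\}$, and exploits the \emph{equality} constraint $\sum_i t_i\mathcal{P}_{1,i}^*=\bar{P}$ together with one-sided Taylor bounds on $V(\cdot)$ to force $|\nu'-\nu_1|\le c_1\sqrt{(\log B)/B}$ uniformly on $\mathcal{T}_B$; this is the step you defer as ``where the real work lies,'' and it is the lemma's content.

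Second, the error rate. Your treatment of the mean fluctuation — Gaussian folding identity plus a Berry--Esseen correction for the non-Gaussianity of $\mu'(\mathbf{H})$ — yields an $O(1/\sqrt{n})$ loss, not the $O(\log n/n)$ the lemma asserts, and you cannot charge that loss to the $C_2/\sqrt{n}$ in (\ref{Prob_I_gamma_LB}): that term is the expectation of the conditional (over $\mathbf{Z}$) Berry--Esseen defect $\mathcal{B}_1$, controlled separately by Lemma \ref{lem_Berry}, and is not available to absorb the error of the present lemma. Sharpening $O(1/\sqrt{n})$ to $O(\log n/n)$ when an i.i.d.\ sum sits inside $\Phi(\sqrt{n}(R-\cdot)/\sqrt{v})$ requires the Edgeworth-type cancellation of the first-order CLT correction against the Gaussian weight; the paper imports this wholesale as Lemma 18 of the Tomamichel--Tan reference rather than reproving it. Your observation that the $n_c$ factors in $\mu_b'$ versus the stated $V_{\text{BF}}'$ only reconcile for $n_c=1$ is a fair catch, but it does not substitute for the two missing arguments above.
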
 
\begin{proof}
The proof follows along similar lines of arguments as in \cite{tomamichel2014second} (Lemma 17 and Lemma 18). However, in contrast to the case therein, the function $\nu'\big(\mathbf{H},\mathcal{P}_1^*(\mathbf{H})\big)$ is not a sum of i.i.d. random variables. This is due to the dependence of the power allocation policy $\mathcal{P}_1^*$ non causally on $\mathbf{H}$. We will show how to circumvent this problem and adapt the proof to our case. To that end, define the \textit{high probability} set $ \mathcal{T}_B\triangleq \Big\{\mathbf{h} \in \mathbb{R}^B: \max\limits_{i\in[1:|\mathcal{H}|]}\big|t_i(\mathbf{h})-q_i\big| \leq \sqrt{\frac{\log B}{B}}\Big\}$. From Hoeffding's inequality (\cite{boucheron2013concentration}), it follows that $\mathbb{P}({\mathcal{T}_B}^c)\leq (2|\mathcal{H}|)/B^2$. 
\par For notational convenience, let $\nu'\big(\mathbf{h},\mathcal{P}_1^*(\mathbf{h})\big)\equiv \nu'$. Then, Taylor's expansion of $\Psi\big(\mathbf{h},\nu'\big)$, as a function of $\nu'\big(\mathbf{h},\mathcal{P}_1^*(\mathbf{h})\big)$ around $\nu_1$ yields $$\Psi\big(\mathbf{H},\nu'\big)=\Psi\big(\mathbf{H},\nu_1\big)+\sum \limits_{k=1}^\infty{\Big(\frac{\nu'-\nu_1}{\nu_1}\Big)}^k\frac{\Psi^{k}(\mathbf{H},\nu_1)}{2^kk!},$$ where, $\Psi^{k}(\mathbf{H},\nu_1)$ is the $k^{\text{th}}$ derivative of $\Psi(\mathbf{H},x)$ at $x=\nu_1$. Let $S_{\infty}(\mathbf{H})$ denote the summation on the right hand side of the above equation, Note that $|S_\infty(\mathbf{H})|\leq 2$. Hence, $\mathbb{E}_{\mathbf{H}}\big[\Psi(\mathbf{H},\nu')\big]$ is upper bounded by $$\mathbb{E}_{\mathbf{H}}\big[\Psi(\mathbf{H},\nu_1)\big]+{\frac{4|\mathcal{H}|}{B^2}}+\mathbb{E}_{\mathbf{H}}\big[S_\infty(\mathbf{H})\big\vert{\mathcal{T}_B}\big].$$ Now, we proceed to obtain a tractable upper bound on $(\nu'-\nu_1)/\nu_1$.  Note that $\mathbb{E}_{\mathbf{H}}[\nu']\neq \nu_1$ and hence we cannot invoke the same lines of arguments as in  \cite{tomamichel2014second}. To that end, observe that, from Taylor's theorem, $V_i^*\leq V_i+V_i'({\mathcal{P}_{1,i}^*-\mathcal{P}_{\text{WF},i}}),$ where, $\mathcal{P}_{1,i}^*\equiv \mathcal{P}_1^*(i,\mathbf{h})$ denotes the power allocated to $\eta_i$ according to $\mathcal{P}_{1}^*$,  $\mathcal{P}_{\text{WF},i}\equiv \mathcal{P}_{\text{WF}}(\eta_i)$, $V_i^*\triangleq V(\eta_i^2\mathcal{P}_{1,i}^*)$, $V_i\triangleq V(\eta_i^2\mathcal{P}_{\text{WF}}(\eta_i)) $ and $V'_i\triangleq {\sigma_N^2\eta_i^2}\big/{(\sigma_N^2+\eta_i^2\mathcal{P}_{\text{WF}}(\eta_i))^3}$. Thus, $\nu'-\nu_1 $ does not exceed $\sum\limits_{i=1}^{|\mathcal{H}|}(t_i-q_i)V_i +t_i[{\mathcal{P}_{1,i}^*-\mathcal{P}_{\text{WF},i}}]V'_i.$ Since we would like to upper bound $(\nu'-\nu_1)$ under the event ${\mathcal{T}_B}$, noting the fact that $V_i\leq 1$, we obtain that under $\mathcal{T}_B$, $$\nu'-\nu_1 \leq |\mathcal{H}|\sqrt{\frac{\log B}{B}}+\sum\limits_{i=1}^{|\mathcal{H}|}t_i[{\mathcal{P}_1^*(i,\mathbf{h})-\mathcal{P}_{\text{WF}}(\eta_i)}]V'_i.$$ Now, for $\delta_B=\sqrt{{(\log B})/{B}}$, let $\mathcal{E}\triangleq \{\mathbf{h}: \mathcal{P}_{1,i}^*-\mathcal{P}_{\text{WF},i}<\delta_B\}$. Then,  note that the summation in the above inequality can be upper bounded in the following way:
\begin{flalign}
\label{taylor_bnd1}
&\sum\limits_{i=1}^{|\mathcal{H}|}t_i[\mathcal{P}_{1,i}^*-\mathcal{P}_{\text{WF},i}]V'_i\mathbf{1}_{\mathcal{E}}+t_i[\mathcal{P}_{1,i}^*-\mathcal{P}_{\text{WF},i}]V'_i\mathbf{1}_{\mathcal{E}^c}\nonumber \\
&\stackrel{(a)}{\leq}\delta_B+\sum\limits_{i=1}^{|\mathcal{H}|}t_i[\mathcal{P}_{1,i}^*-\mathcal{P}_{\text{WF},i}]\mathbf{1}_{\mathcal{E}^c}\stackrel{(b)}{\leq}\delta_B+\bar{P}-\sum\limits_{i=1}^{|\mathcal{H}|}t_i\mathcal{P}_{\text{WF},i},
\end{flalign} 
where, in obtaining $(a)$, we have used the definition of $\mathcal{E}$ and that $V_i'\leq 1$. In getting $(b)$, we made use of the facts that, under $\mathcal{E}^c$, $\mathcal{P}_{1,i}^*-\mathcal{P}_{\text{WF},i}>0$, $V_i'\leq 1$ and  $\mathcal{P}_1^*$ satisfies the $\mathbf{ST}$ constraint with equality. Finally, note that, $\sum_{i=1}^{|\mathcal{H}|}t_i\mathcal{P}_{\text{WF},i}$ is bounded below by
\begin{flalign}
\label{taylor_bnd2}
 \sum\limits_{i=1}^{|\mathcal{H}|}t_i\mathcal{P}_{\text{WF},i}\mathbf{1}_{\mathcal{T}_B}&\stackrel{(a)}{\geq}\sum\limits_{i=1}^{|\mathcal{H}|}(q_i-\delta_B)\mathcal{P}_{\text{WF},i}\mathbf{1}_{{\mathcal{T}_B}}\\
 & \stackrel{(b)}{\geq}\bar{P}-\lambda\delta_B|\mathcal{H}|-\frac{\lambda|\mathcal{H}|}{B^2},
\end{flalign} 
where, $(a)$ follows from the fact, under the event $\mathcal{T}_B$, $t_i\geq q_i-\delta_B$. In $(b)$, $\lambda$ is as in (\ref{gold_eqn}) and follows from the fact that $\mathcal{P}_{\text{WF},i}\leq \lambda$ and $\mathbb{P}({\mathcal{T}_B}^c)\leq (2|\mathcal{H}|)/B^2$. Using (\ref{taylor_bnd1}) and (\ref{taylor_bnd2}), we see that $\nu'-\nu_1\leq 2(1+\lambda|\mathcal{H}|)\sqrt{{(\log B)}/{B}}$.
\par  As for obtaining a similar upper bound for $\nu_1-\nu'$, let ${V_i^*}'$ be the function obtained by replacing $\mathcal{P}_{\text{WF},i}$ with $\mathcal{P}_{1,i}^*$ in the function $V_i'$ defined earlier. Again, using Taylor's theorem, it follows that $V_i\leq V_i^*+{V_i^*}'(\mathcal{P}_{\text{WF},i}-\mathcal{P}_{1,i}^*).$ Note that, ${V_i^*}'\leq \eta_{\max}^2.$ Thus, $\nu_1-\nu' \leq |\mathcal{H}|\sqrt{\frac{\log B}{B}}+\sum\limits_{i=1}^{|\mathcal{H}|}q_i[{\mathcal{P}_{\text{WF},i}-\mathcal{P}_{1,i}^*}]{V_i^*}'$. Now, $\sum\limits_{i=1}^{|\mathcal{H}|}q_i[\mathcal{P}_{\text{WF},i}-\mathcal{P}_{1,i}^*]{V_i^*}'\mathbf{1}_{\mathcal{E}}\leq \delta_B\eta_{\max}^2$, where we have used the fact that ${V_i^*}'\leq \eta_{\max}^2$ and the definition of $\mathcal{E}$. Now, 
\begin{flalign*}
&\sum\limits_{i=1}^{|\mathcal{H}|}q_i[\mathcal{P}_{\text{WF},i}-\mathcal{P}_{1,i}^*]{V_i^*}'\mathbf{1}_{\mathcal{E}^c}\nonumber \stackrel{(a)}{\leq}\eta_{\max}^2\sum\limits_{i=1}^{|\mathcal{H}|}q_i[\mathcal{P}_{\text{WF},i}-\mathcal{P}_{1,i}^*]\\
& \stackrel{(b)}{=}\eta_{\max}^2\Big(\bar{P}-\sum\limits_{i=1}^{|\mathcal{H}|}q_i\mathcal{P}_{1,i}^*\Big) \\
&{=}\eta_{\max}^2\Big(\bar{P}-\sum\limits_{i=1}^{|\mathcal{H}|}q_i\mathcal{P}_{i,1}^*\mathbf{1}_{\mathcal{T}_B}-\sum\limits_{i=1}^{|\mathcal{H}|}q_i\mathcal{P}_{i,1}^*\mathbf{1}_{{\mathcal{T}_B}^c}\Big),
\end{flalign*}
\begin{flalign*}
 &\stackrel{(c)}{\leq}\eta_{\max}^2\Big(\bar{P}-\sum\limits_{i=1}^{|\mathcal{H}|}\big[(t_i-\delta_B)\mathcal{P}_{i,1}^*\mathbf{1}_{{\mathcal{T}_B}}-(t_i+\delta_B)\mathcal{P}_{i,1}^*\mathbf{1}_{{\mathcal{T}_B}^c}\big]\Big)\\
 &\stackrel{(d)}{\leq}\eta_{\max}^2\sum\limits_{i=1}^{|\mathcal{H}|}\delta_B\mathcal{P}_{i,1}^*\mathbf{1}_{{\mathcal{T}_B}}\stackrel{(e)}{\leq}\frac{\delta_B\bar{P}}{q_{\min}-\delta_B}\\
 &\stackrel{(f)}{\leq}\frac{\delta_B\bar{P}}{q_{\min}(1-\alpha)}=\frac{\bar{P}}{q_{\min}(1-\alpha)}\sqrt{\frac{\log B}{B}},
\end{flalign*}
where, in $(a)$ we have used the fact that ${V_i^*}'\leq \eta_{\max}^2$ and the definition of $\mathcal{E}$, $(b)$ follows from the fact that $\mathbb{E}_{\text{H}}[\mathcal{P}_{\text{WF}}(H)]=\bar{P},$ in $(c)$ we have made use of the definition of $\mathcal{T}_B$, in $(d)$, we have used the fact that $\mathcal{P}_1^*$ satisfies the $\mathbf{ST}$ constraint with equality, $(e)$ follows from the fact that, under $\mathcal{T}_B$, $\mathcal{P}_{1,i}^*\leq \bar{P}/(q_i-\delta_B)\leq  \bar{P}/(q_{\min}-\delta_B)$ and $(f)$ holds for some fixed $\alpha\in (0,1)$ and $B$ \textit{large enough} so that $\sqrt{(\log B)/B}<\alpha q_{\min}$.  Let $c_1=\max\big\{2(1+\lambda|\mathcal{H}|),{\bar{P}}/{q_{\min}(1-\alpha)}\big\}$. Thus, we have shown that $|\nu'-\nu_1|\leq c_1 \sqrt{(\log B)/B}$. Also, for $G$ as in Theorem \ref{Th_ST_LB2}, note that $\nu_1\geq \mathbb{E}_{\text{G}}\big[{G^2}/{2(1+G)^2}\big]=\mathbb{E}_{\text{H}}\big[\mathcal{P}^2_{\text{WF}}(H)/2\lambda^2\big]\geq \bar{P}^2/2\lambda^2,$ where, the first inequality follows from the definition of $\nu_1$, the equality follows by invoking (\ref{rel_WF_ineq}) and the last bound follows from Jensen's inequality. Thus, we have obtained a lower bound on $1/\nu_1$. Hence, we have obtained that $|(\nu'-\nu_1)/\nu_1|\leq c_2\sqrt{(\log B)/B},$ where $c_2=2c_1\lambda^2/\bar{P}^2.$ It then follows from the analysis in \cite{tomamichel2014second} that $\big\vert\mathbb{E}_{\mathbf{H}}\big[\Psi(\mathbf{H},\nu')\big]-\mathbb{E}_{\mathbf{H}}\big[\Psi(\mathbf{H},\nu_1)\big]\big\vert\leq C_1\frac{\log n}{n}$, for some constant $C_1>0$. 
\par Next, note that $\mu''(\mathbf{H})$ in the function $\Psi(\mathbf{H})$ is a sum of i.i.d. random variables taking values in $\mathcal{H}$. Hence, along with the above bound, directly invoking the result in \cite{tomamichel2014second} (Lemma 18), we obtain that $$\mathbb{E}_{\mathbf{H}}\big[{\Psi(\mathbf{H})}\big]\geq \Phi\Bigg(\sqrt{n}\frac{R- \mathbf{C}(\bar{P})}{\sqrt{V_{\text{BF}}'(\bar{P})}}\Bigg)-C_3\frac{\log n}{n}.$$ 
\end{proof}
\section*{Appendix F}
\begin{lem}
\label{lem_Berry}
Given a realization of the fading vector $\mathbf{h}$ and a  vector $\mathbf{p}\triangleq (p_1,\hdots,p_B)$ (where $p_b$ corresponds to the power allocated to the state $\eta_{i_b}$ which occurs in block $b$), let $\mathcal{B}_1(\mathbf{h},\mathbf{p})\triangleq (B\bar{P}C_{\min})^{-1}\sum\limits_{b=1}^{B}\big[K_3p_b^3+K_2p_b^2+K_1p_b+K_0\big],$ for some constants $C_{\min},~K_i>0,~i=0,\hdots,3$, not depending on $\mathbf{h}$. Let $\widehat{\mathcal{B}}_1(\mathbf{p})\triangleq \mathbb{E}_{\mathbf{H}}\big[\mathcal{B}_1(\mathbf{H},\mathbf{p})\big]$ and $\mathcal{P}_2^*$ be as defined in (\ref{prob_LB_seq2}). Then, for $\widehat{\mathcal{B}}_1\equiv \widehat{\mathcal{B}}_1(\mathcal{P}_2^*),$ $\widehat{\mathcal{B}}_1 =O\big(\frac{1}{\sqrt{Bn_c}}\big)$. 
\end{lem}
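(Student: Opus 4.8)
The plan is to first identify the maximiser $\mathcal{P}_2^*$ in closed form and then estimate its average over $\mathbf{H}$. For fixed $\mathbf{h}$, the objective $\mathcal{P}(\mathbf{h})\mapsto\mathcal{B}_1(\mathbf{h},\mathcal{P}(\mathbf{h}))$ is a positive combination of the separable convex maps $p\mapsto K_3p^3+K_2p^2+K_1p+K_0$ (convex because $K_i>0$ and the leading term is cubic), maximised over the compact polytope $\Pi_{\mathbf{h}}$. Since a convex function on a polytope is maximised at an extreme point, and the extreme points of $\Pi_{\mathbf{h}}=\{\mathcal{P}\geq 0:\sum_i t_i(\mathbf{h})\mathcal{P}_i=\bar{P}\}$ load the entire budget onto a single occurring state, $\mathcal{P}_2^*$ sets $\mathcal{P}_{2,i^*}^*(\mathbf{h})=\bar{P}/t_{i^*}(\mathbf{h})$ for one index $i^*$ and zero elsewhere. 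As that state occupies exactly $Bt_{i^*}(\mathbf{h})$ blocks, the worst-case cubic sum equals $\sum_{b}(p_b^*)^3=B\bar{P}^3/t_{i^*}^2$, and the maximising $i^*$ is the rarest occurring state; thus $\mathcal{B}_1(\mathbf{h},\mathcal{P}_2^*(\mathbf{h}))$ is governed by negative powers of $t_{\min}(\mathbf{h})\triangleq\min_i t_i(\mathbf{h})$.

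Next I would make the dependence on $B$ and $n_c$ explicit. The prefactor $(B\bar{P}C_{\min})^{-1}$, together with the $n_c$-scaling that the constants $K_i$ inherit from the Berry--Esseen step (\ref{Berry}) -- where each block variable $W_b$ is a sum of $n_c$ i.i.d. per-symbol contributions, so that $\nu(\mathbf{h})=\Theta(Bn_c)$ while $\tau(\mathbf{h})$ sits one factor of $\sqrt{n_c}$ below $\nu(\mathbf{h})^{3/2}$ -- supplies a baseline scale of order $(Bn_c)^{-1/2}$. The remaining task is to show that substituting the worst-case $\mathcal{P}_2^*$ for the water-filling allocation inflates this baseline only by a factor whose $\mathbf{H}$-expectation stays bounded.

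To bound that expectation I would split on the event $\mathcal{G}\triangleq\{t_{\min}(\mathbf{H})\geq q_{\min}/2\}$, writing $\widehat{\mathcal{B}}_1=\mathbb{E}_{\mathbf{H}}[\mathcal{B}_1(\mathbf{H},\mathcal{P}_2^*(\mathbf{H}))\mathbf{1}_{\mathcal{G}}]+\mathbb{E}_{\mathbf{H}}[\mathcal{B}_1(\mathbf{H},\mathcal{P}_2^*(\mathbf{H}))\mathbf{1}_{\mathcal{G}^c}]$. On $\mathcal{G}$ every negative power of $t_{\min}$ is $O(1)$ (controlled by $q_{\min}$), so the first term is the baseline $O(1/\sqrt{Bn_c})$. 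For the second term, since $Bt_i(\mathbf{H})$ is $\mathrm{Binomial}(B,q_i)$, a Chernoff bound gives $\mathbb{P}(\mathcal{G}^c)\leq|\mathcal{H}|\exp(-cq_{\min}B)$; combined with the crude deterministic bound $t_{\min}(\mathbf{h})\geq 1/B$, which makes $\mathcal{B}_1=O(\mathrm{poly}(B))$ on $\mathcal{G}^c$, the product is $o(1/\sqrt{Bn_c})$ because the exponential decay beats any polynomial. Summing the two contributions yields the claim.

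The hard part is precisely the worst-case nature of $\mathcal{P}_2^*$: the maximiser is free to pour the entire power budget into a single rarely-occurring state, so $\mathcal{B}_1$ blows up as an inverse power of the empirical occupation fraction $t_{\min}$. Taming this needs an \emph{exponential} concentration of the fractions $t_i$ about $q_i$ -- a merely polynomial tail such as the $\mathbb{P}(\mathcal{T}_B^c)\leq 2|\mathcal{H}|/B^2$ used in Appendix E would not suppress the $\mathrm{poly}(B)$ blow-up -- so that the rare-state events are down-weighted faster than $\mathcal{B}_1$ grows. I expect the most delicate point to be the bookkeeping of the $n_c$-dependence carried by the constants, since that is what distinguishes the claimed $O(1/\sqrt{Bn_c})$ from the weaker $O(1)$ a naive bound would give.
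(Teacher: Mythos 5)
Your proposal is correct and follows essentially the same route as the paper's proof: lower-bound the variance by $\nu(\mathbf{h})\geq B\bar{P}C_{\min}$ to fix the baseline scale, reduce the problem to showing $\mathbb{E}_{\mathbf{H}}\sum_{b}\big[\mathcal{P}^*_{2,h_b}(\mathbf{H})\big]^3=O(B)$, and prove the latter by splitting on exponential (Hoeffding) concentration of the empirical frequencies $t_i$, with the crude bound $t_i\geq 1/B$ (equivalently $\mathcal{P}^*_{2,i}\leq n\bar{P}$) on the bad event absorbed by the exponential tail --- exactly the paper's $S_\delta$ decomposition, and your remark that the polynomial tail of Appendix E would not suffice is precisely why the paper switches to Hoeffding's inequality here. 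The only real difference is cosmetic: you characterize $\mathcal{P}_2^*$ explicitly as an extreme point of $\Pi_{\mathbf{h}}$ via convexity, whereas the paper never identifies the maximizer and extracts $\mathcal{P}^*_{2,i}\leq \bar{P}/t_i$ directly from the budget constraint $\sum_i t_i\mathcal{P}^*_{2,i}=\bar{P}$; both give the same estimate.
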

\begin{proof}
Since we are interested in the asymptotic as a function of $B$, we will prove the case for $n_c=1$ (so as to avoid unnecessary notation cluttering). The result for $n_c>1$ can be proved in an identical way. First, we will consider the term $\nu(\mathbf{h})=\sum\limits_{b=1}^{B}\nu_{h_b}(\mathbf{h})$, where $\nu_{h_b}(\mathbf{h})$ as defined in (\ref{variance_rel}). Note that
\begin{flalign*}
\nu(\mathbf{h})\geq &\sum_{b=1}^{B}\frac{\sigma_N^2h_b^2\mathcal{P}_{2,h_b}^*(\mathbf{h})}{\big(\sigma_N^2+g_b^2\big)^2} = \sum_{i=1}^{|\mathcal{H}|}B_i(\mathbf{h})\frac{\sigma_N^2\eta_i^2\mathcal{P}_{2,h_b}^*(\mathbf{h})}{\big(\sigma_N^2+\eta_i^2\mathcal{P}_{2,h_b}^*(\eta_i)\big)^2}\\
\stackrel{(a)}{\geq}& \sum_{i=1}^{|\mathcal{H}|}\frac{B_i(\mathbf{h})}{B}\mathcal{P}_{2,h_b}^*(\mathbf{h})C_{\text{min}}=B\bar{P}C_{\text{min}},
\end{flalign*}
where, in $(a)$, $C_{\text{min}}=\min\limits_{1\leq i\leq|\mathcal{H}|}\frac{\sigma_N^2\eta_i^2}{\big(\sigma_N^2+\eta_i^2\mathcal{P}_{\text{WF}}(\eta_i)\big)^2}$. Also, note that $C_{\text{min}}>0$, as $\eta_i>0$, for all $i$. Thus $\frac{1}{\nu\big(\mathbf{H}\big)}\leq \frac{1}{B\bar{P}C_{\text{min}}}$ a.s..
\par Now, we consider the term $\tau(\mathbf{h})$ as defined in (\ref{Berry}). Note that $|W_b|=$
\begin{flalign*}
&\Big\vert W_b''\big(\mathcal{P}_{2,h_b}^*(\mathbf{h})\big)-\mu_b\Big\vert \leq \Bigg\vert\frac{2h_b\sqrt{\mathcal{P}_{2,h_b}^*(\mathbf{h})}\sigma_N^2Z_b-g_b^2(Z_b^2-1)}{2\sigma_N^2\big(\sigma_N^2+g_b^2\big)}\Bigg\vert\\
&\leq \big\vert Z_b\big\vert\frac{h_b\sqrt{\mathcal{P}_{2,h_b}^*(\mathbf{h})}}{\big(\sigma_N^2+g_b^2\big)}+\big\vert (Z_b^2-1)\big\vert \frac{g_b^2}{2\sigma_N^2\big(\sigma_N^2+g_b^2\big)}
\end{flalign*}
\begin{flalign*}
&\leq \big\vert Z_b\big\vert\frac{h_b^2\mathcal{P}_{2,h_b}^*(\mathbf{h})}{\big(\sigma_N^2+g_b^2\big)}+\frac{\big\vert Z_b\big\vert}{\big(\sigma_N^2+g_b^2\big)}+ \frac{\big\vert (Z_b^2-1)\big\vert}{2\sigma_N^2}\\
&\stackrel{(a)}{\leq} \big\vert Z_b\big\vert\frac{\mathcal{P}_{2,h_b}^*(\mathbf{h})}{\lambda}+\frac{\big\vert Z_b\big\vert}{\sigma_N^2}+ \frac{\big\vert (Z_b^2-1)\big\vert}{\sigma_N^2},
\end{flalign*}
where, in $(a)$ we have made use of (\ref{rel_WF_ineq}) to get the first term.
 Thus, it can be seen that $\mathbb{E}_{\text{Z}}\big[|W_b|^3\big]\leq K_3\big[\mathcal{P}_{2,h_b}^*(\mathbf{h})\big]^3+K_2\big[\mathcal{P}_{2,h_b}^*(\mathbf{h})\big]^2+K_1\big[\mathcal{P}_{2,h_b}^*(\mathbf{h})\big]+K_0,$ for some constants $K_i>0$, not depending on $\mathbf{h}$. Next, we will show that  $\mathbb{E}_{\mathbf{H}}\sum\limits_{b=1}^{B}\big[\mathcal{P}_{2,h_b}^*(\mathbf{H})\big]^3$ is $O(B)$. To that end, since $\mathcal{P}_2^*(\mathbf{h})\in\Pi_\mathbf{h}$, 
 \begin{flalign*}
 \mathop{\overline{\lim}}_{{B\rightarrow \infty}}\sum\limits_{i=1}^{|\mathcal{H}|}t_i(\mathbf{h})\mathcal{P}_{2,i}^*(\mathbf{h})=\sum\limits_{i=1}^{|\mathcal{H}|}q_i\mathop{\overline{\lim}}_{{B\rightarrow \infty}}\mathcal{P}_{2,i}^*(\mathbf{h})=\bar{P}
 \end{flalign*}
 This implies that $$\mathop{\overline{\lim}}_{{B\rightarrow \infty}}\mathcal{P}_{2,i}^*(\mathbf{h})<{\bar{P}}/{q_i},~\text{for all}~i.$$ Hence,
 \begin{flalign}
 \hspace{-14pt}
 \label{limit_P}
 \mathop{\overline{\lim}}_{{B\rightarrow \infty}}\frac{1}{B}\sum\limits_{b=1}^{B}\big[\mathcal{P}_{2,h_b}^*(\mathbf{h})\big]^3=  \sum\limits_{i=1}^{|\mathcal{H}|}q_i\big[\mathop{\overline{\lim}}_{{B\rightarrow \infty}}\mathcal{P}_{2,i}^*(\mathbf{h})\big]^3= C_1.
 \end{flalign}
 Here, $C_1$ is a constant less than ${q^{-1}_{\min}}|\mathcal{H}|\bar{P}^2$, where $q_{\min}$ is as defined in Section \ref{mod_notn}. Next, denote $S_\delta\triangleq \bigcup\limits_{i=1}^{|\mathcal{H}|}\big\{\mathbf{h}:|t_i(\mathbf{h})-q_i|<\delta\big\}$, for $0<\delta<q_{\min}$. Also, for any power allocation vector $\mathcal{P}(\mathbf{H})$, let $\mathbf{\pi}\big(\mathbf{H},\mathcal{P}(\mathbf{H})\big)\triangleq \sum\limits_{i=1}^{|\mathcal{H}|}t_i(\mathbf{H})\mathcal{P}(\mathbf{H})$. Note that 
 \begin{flalign*}
 \mathbf{\pi}\big(\mathbf{H},\mathcal{P}_2^*(\mathbf{H})\big)= \mathbf{\pi}\big(\mathbf{H},\mathcal{P}_2^*(\mathbf{H})\mathbf{1}_{\{t_i(\mathbf{H})>0\}}\big)=\bar{P}.
 \end{flalign*}
 Hence, $\mathcal{P}_2^*(\mathbf{H})\mathbf{1}_{\{t_i(\mathbf{H})>0\}}\leq \bar{P},$ for all $i\in\{1,\hdots,|\mathcal{H}|\}$. Since, $n_i(\mathbf{H})\geq 1$ under the event $\{t_i(\mathbf{H})>0\}$, we obtain 
 \begin{equation}
 \label{pow_inequality}
 \mathcal{P}_2^*(\mathbf{H})\mathbf{1}_{\{t_i(\mathbf{H})>0\}}\leq n\bar{P}.
 \end{equation} 
 Using (\ref{pow_inequality}), $\mathbb{E}_{\mathbf{H}}\Bigg[\pi\Big(\mathbf{H},\big[\mathcal{P}_2^*(\mathbf{H})\mathbf{1}_{\{t_i(\mathbf{H})>0\}}\big]^3\Big)\mathbf{1}_{S_\delta}\Bigg]$ is upper bounded by  $\mathbb{E}_{\mathbf{H}}\Bigg[\pi\Big(\mathbf{H},B^3\bar{P}^3\Big)\mathbf{1}_{S_\delta}\Bigg]$. But, $\pi\Big(\mathbf{H},B^3\bar{P}^3\Big)=B^3\bar{P}^3\mathbb{P}\big(S_\delta\big)$. By using the union bound and Hoeffding's inequality (\cite{boucheron2013concentration}), we have
 \begin{flalign*}
 \mathbb{P}\big(S_\delta\big) \leq \sum_{i=1}^{|\mathcal{H}|}\mathbb{P}\big(|t_i(\mathbf{H})-q_i|>\delta\big)\leq 2|\mathcal{H}|e^{-2\delta B}.
 \end{flalign*}
 Thus, with $C_2\triangleq 2|\mathcal{H}|\bar{P}^3$, 
 \begin{equation}
 \label{expected_1}
 \mathbb{E}_{\mathbf{H}}\Bigg[\pi\Big(\mathbf{H},\big[\mathcal{P}_2^*(\mathbf{H})\mathbf{1}_{\{t_i(\mathbf{H})>0\}}\big]^3\Big)\mathbf{1}_{S_\delta}\Bigg]\leq C_2B^3e^{-2\delta B}.
 \end{equation}
  Next, from the fact that $\mathcal{P}_2^*(\mathbf{H})\in\Pi_\mathbf{h}$, note that, $\pi\Big(\mathbf{H},\big[\mathcal{P}_2^*(\mathbf{H})\mathbf{1}_{\{t_i(\mathbf{H})>0\}}\big]^3\Big)\mathbf{1}_{S_\delta^c}\leq \frac{\bar{P}^3}{q_{\min}-\delta}$. Then, using (\ref{limit_P}) and bounded convergence theorem, it follows that 
  \begin{equation}
 \label{expected_2}
 \mathbb{E}_{\mathbf{H}}\Bigg[\pi\Big(\mathbf{H},\big[\mathcal{P}_2^*(\mathbf{H})\mathbf{1}_{\{t_i(\mathbf{H})>0\}}\big]^3\Big)\mathbf{1}_{S_\delta^c}\Bigg]\leq C_1.
 \end{equation}
 From (\ref{expected_1}) and (\ref{expected_2}), it follows that $\mathbb{E}_{\mathbf{H}}\sum\limits_{b=1}^{B}\big[\mathcal{P}_{2,h_b}^*(\mathbf{H})\big]^3$ is $O(B)$. Hence, $\mathbb{E}_{\mathbf{H}}\sum\limits_{b=1}^{B}\big[\mathcal{P}_{2,h_b}^*(\mathbf{H})\big]^i$, $i=1,2$ are $O(B)$ as well.  Combining this with the bound on $\nu(\mathbf{h})$, we get the required result.
\end{proof}

\bibliographystyle{IEEEtran}
\bibliography{Delay_PowControl} 
 \end{document}